\newtheorem{observation}{Observation}
\newcommand{\cP}{{\mathcal P}}
\newcommand{\cA}{{\mathcal A}}
\newcommand{\cR}{{\mathcal R}}
\newcommand{\shortOnly}[1]{\ifthenelse{\boolean{short}}{#1}{}}
\newcommand{\onlyShort}[1]{\ifthenelse{\boolean{short}}{#1}{}}
\newcommand{\longOnly}[1]{\ifthenelse{\boolean{short}}{}{#1}}
\newcommand{\onlyLong}[1]{\ifthenelse{\boolean{short}}{}{#1}}
\newcommand{\shortLong}[2]{\ifthenelse{\boolean{short}}{#2}{#1}}
\newcommand{\longShort}[2]{\ifthenelse{\boolean{short}}{#2}{#1}} 
\begin{document}
\mainmatter
\title{Byzantine Geoconsensus}
\author{Joseph Oglio \and Kendric Hood \and Gokarna Sharma \and Mikhail Nesterenko}
\institute{Department of Computer Science, Kent State University, Kent, OH 44242, USA\\
\email{\{joglio@,khood@,sharma@cs.,mikhail@cs.\}kent.edu}}

\maketitle



\begin{abstract}
We define and investigate the consensus problem for a set of $N$ processes embedded on the $d$-dimensional plane, $d\geq 2$, which we call the {\em geoconsensus} problem. The processes have unique coordinates and can communicate with each other through oral messages. In contrast to the literature where processes are individually considered Byzantine, it is considered that all processes covered by a finite-size convex fault area $F$ are Byzantine and there may be one or more processes in a fault area. Similarly as in the literature where correct processes do not know which processes are Byzantine, it is assumed that the fault area location is not known to the correct processes. We prove that the geoconsensus is impossible if all processes may be covered by at most three areas where one is a fault area.

Considering the 2-dimensional embedding, on the constructive side,  for $M \geq 1$  fault areas $F$ of arbitrary shape with diameter $D$,  we present a consensus algorithm that tolerates $f\leq N-(2M+1)$ Byzantine processes provided that there are $9M+3$ processes with pairwise distance between them greater than $D$. For square $F$ with side $\ell$,
we provide a consensus algorithm that lifts this pairwise distance requirement and tolerates $f\leq N-15M$ Byzantine processes given that all processes are covered by at least $22M$ axis aligned squares of the same size as $F$. For a circular $F$ of diameter $\ell$, this algorithm tolerates $f\leq N-57M$ Byzantine processes if all processes are covered by at least $85M$ circles. We then extend these results to various size combinations of fault and non-fault areas as well as $d$-dimensional process embeddings, $d\geq 3$.
\end{abstract}

\section{Introduction}



The problem of {\em Byzantine consensus}~\cite{lamport1982byzantine,Pease80} has been attracting extensive attention from researchers and engineers in distributed systems. 
It has applications in distributed storage~\cite{abd2005fault,adya2002farsite,PBFT,castro2003base,kubiatowicz2000oceanstore}, secure communication~\cite{cramer1997secure}, safety-critical systems~\cite{rushby2001bus}, blockchain~\cite{miller2016honey,sousa2018byzantine,zamani2018rapidchain}, and Internet of Things (IoT) \cite{LaoD0G20}.  

Consider a set of $N$ processes with unique IDs that can communicate with each other. 
Assume that $f$ processes out of these $N$ processes are Byzantine. Assume also that which process is Byzantine is not known to correct processes, except possibly the size $f$ of Byzantine processes. The Byzantine consensus problem here requires the $N-f$ correct processes to reach to an agreement tolerating arbitrary behaviors of the $f$ Byzantine processes.  


Pease {\it et al.}~\cite{Pease80} showed that the maximum possible number of faults $f$ that can be tolerated depends on the way how the (correct) processes communicate: through oral messages or through unforgable written messages (also called signatures). An {\em oral} message is completely under the control of the sender, therefore, if the sender is Byzantine, then it can transmit any possible message. This is not true for a signed, written message.  
Pease {\it et al.}~\cite{Pease80} showed that the consensus is solvable only if $f < N/3$ when communication between processes is through oral messages. For signed, written messages, they showed that the consensus is possible tolerating any number of faulty processes $f\leq N$.


The Byzantine consensus problem discussed above assumes nothing about the locations of the processes, except that they have unique IDs. Since each process can communicate with each other, it can be assumed that  the $N$ processes work under a complete graph (i.e., clique) topology consisting of $N$ vertices and $N(N-1)/2$ edges.  
Byzantine consensus has also been studied in arbitrary graphs~\cite{vaidya2012iterative,Pease80} and in wireless networks~\cite{moniz2012byzantine}, relaxing the complete graph topology requirement so that a process may not be able to communicate with all other $N-1$ processes. The goal in these studies is to establish necessary and sufficient conditions for consensus to be solvable. For example,  Pease {\it et al.}~\cite{Pease80} showed that the consensus is solvable through oral messages tolerating $f$ Byzantine processes if the communication topology is $3f$-regular. 
Furthermore, there is a number of studies on a related problem of {\em Byzantine broadcast} when the communication topology is not a complete graph topology, see for example~\cite{koo2004broadcast,pelc2005broadcasting}. Byzantine broadcast becomes fairly simple for a complete graph topology. 

Recently, motivated by IoT-blockchain applications, Lao {\it et al.} \cite{LaoD0G20} proposed a consensus protocol, which they call Geographic-PBFT or simply G-PBFT, that extends the well-known PBFT consensus protocol by Castro and Liskov \cite{PBFT} to the geographic setting. 
The authors considered the case  of fixed IoT devices embedded on geographical locations for data collection and processing. The location data can be obtained through recording location information at the installation time or can also be obtained using low-cost GPS receivers or location estimation algorithms~\cite{bulusu2004self,hightower2001location}.
They argued that the fixed IoT devices have more computational power than other mobile IoT devices (e.g., mobile phones and sensors) and are less likely to become malicious nodes.
They then exploited (geographical) location information of fixed IoT devices to reach consensus. They argued that G-PBFT avoids Sybil attacks, reduces the overhead for validating and recording transactions, and achieves high consensus efficiency and low traffic intensity. However, G-PBFT is validated only experimentally and no formal analysis is given. 

In this paper, we formally define and study the Byzantine consensus problem when processes are embedded on the geographical locations in fixed unique coordinates, which we call the {\em Byzantine geoconsensus} problem. If fault locations are not constrained, the geoconsensus problem differs little from the Byzantine consensus. This is because the unique locations serve as IDs of the processes and same set of results can be established depending on whether communication between processes is through oral messages or unforgable written messages. 
Therefore, we relate the fault locations to the geometry of the problem, assuming that the faults are limited to a \emph{fault area} $F$ (going beyond the limitation of mapping Byzantine behavior to individual processes). In other words, the fault area lifts the restriction of mapping Byzantine behavior to individual processes in the classic setting and now maps the Byzantine behavior to all the processors within a certain area in the geographical setting.  Applying the classic approaches of Byzantine consensus may not exploit the collective Byzantine behavior of the processes in the fault area and hence they may not provide benefits in the geographical setting.   
Furthermore,
we are not aware of prior work in Byzantine consensus where processes are embedded in a geometric plane while faulty processes are located in a fixed area.

In light of the recent development on location-based consensus protocols, such as G-PBFT \cite{LaoD0G20}, discussed above, 
we believe that our setting deserves a formal study. 
In this paper we consider the Byzantine geoconsensus problem in case the processes are embedded in a $d$-dimensional plane, $d\geq 2$.   
We study the possibility and bounds for a solution to geoconsensus. 
We demonstrate that geoconsensus allows quite robust solutions: all but a fixed number of processes may be Byzantine. 

\vspace{1mm}
\noindent\textbf{Contributions.} Let $N$ denotes the number of processes, $M$ denotes the number of fault areas $F$, $D$ denotes the diameter of $F$, and $f$ denotes the number of faulty processes. Assume that each process can communicate with all other $N-1$ processes and the communication is through oral messages. Assume that all the processes covered by a faulty area $F$ are Byzantine. 
The correct processes know the size of each faulty area (such as its diameter, number of edges, area, etc.) and the total number $M$ of them but do not know their exact location. 

In this paper, we made the following five contributions:
\onlyShort{\vspace{-1mm}}
\begin{itemize}
\item [(i)] 
An impossibility result 
that geoconsensus is not solvable if all $N$ processes may be covered by $3$ equal size areas $F$ and one of them may be fault area. This extends to the case of $N$ processes being covered by $3M$ areas $F$ with $M$ areas being faulty.

\item [(ii)] The algorithm {\em BASIC} that 
solves geoconsensus tolerating $f\leq N-(2M+1)$ Byzantine processes, provided that there are $9M+3$ processes with pairwise distance between them greater than $D$. 

\item [(iii)] The algorithm {\em GENERIC} that 
solves geoconsensus tolerating $f\leq N-15M$ Byzantine processes, provided that all $N$ processes are covered by $22M$ axis-aligned squares of the same size as the fault area $F$, removing the pairwise distance assumption in the algorithm {\em BASIC}. 

\item [(iv)] An extension of the {\em GENERIC} algorithm to circular $F$ tolerating $f\leq N-57M$ Byzantine processes if all $N$ processes are covered by $85M$ circles of same size as $F$.

\item [(v)] Extensions of the results (iii) and (iv) to various size combinations of fault and non-fault areas as well as to $d$-dimensional process embeddings, $d\geq 3$. 
\end{itemize}

\onlyShort{\vspace{-1mm}}
Our results are interesting as they provide trade-offs among $N, M,$ and $f$, which is in contrast to the trade-off provided only between $N$ and $f$ in the Byzantine consensus literature. For example, the results in Byzantine consensus show that only $f<N/3$ Byzantine processes can be tolerated, whereas our results show that as many as $f\leq N-\alpha M$, Byzantine processes can be tolerated provided that the processes are placed on the geographical locations so that at least $\beta M$ areas (same size as $F$) are needed to cover them. Here $\alpha$ and $\beta$ are both integers with $\beta\geq c\cdot \alpha$ for some constant $c$. 

Furthermore, our geoconsensus algorithms reduce the message and space complexity in solving consensus. In the Byzantine consensus literature, every process sends communication with every other process in each round. Therefore, in one round there are $O(N^2)$ messages exchanged in total.
As the consensus algorithm runs for $O(f)$ rounds, in total $O(f\cdot N^2)$ messages are exchanged in the worst-case.
In our algorithms, let $N$ processes are covered by $X$ areas of size the same as fault area $F$. Then in a round
only $O(X^2)$ messages are  exchanged. Since the algorithm runs for $O(M)$ rounds to reach geoconsensus, in total $O(M\cdot X^2)$ messages are exchanged in the worst-case. Therefore, our geoconsensus algorithms are message (equivalently communication) efficient. The improvement on space complexity can also be argued analogously. 

Finally, Pease {\it et al.} \cite{Pease80} showed that it is impossible to solve consensus through oral messages when $N=3f$ but there is a solution when $N\geq 3f+1$. That is, there is no gap on the impossibility result and a solution. We can only show that it is impossible to solve consensus when all $N$ processes are covered by $3M$ areas that are the same size as $F$ but there is a solution when all $N$ processes are covered by at least $22M$ areas (for the axis-aligned squares case). Therefore, there is a general gap between the condition for impossibility and the condition for a solution. We leave this gap as open and note that it would be interesting at the same time challenging to close this gap.   

\vspace{1mm}
\noindent\textbf{Techniques.}
Our first contribution is established extending the impossibility proof technique of Pease {\it et al.} \cite{Pease80} for Byzantine consensus to the geoconsensus setting. 
The algorithm {\em BASIC} is established first through a leader selection to compute a set of leaders so that they are pairwise more than distance $D$ away from each other and then running carefully the Byzantine consensus algorithm of Pease {\it et al.} \cite{Pease80} on those leaders. 

For the algorithm {\em GENERIC},
we start by covering processes by axis-aligned squares and studying how these squares may intersect with fault areas of various shapes and sizes. Determining optimal axis-aligned square coverage is NP-hard. We provide constant-ratio approximation algorithms. We also discuss how to cover processes by circular areas. 
Then, we use these ideas to construct algorithm {\em GENERIC}  for fault areas that are either square or circular, which does not need the pairwise distance requirement of \emph{BASIC} but requires the bound on the number of areas in the cover area set. Finally, we extend these ideas to develop covering techniques for higher dimensions. 


\vspace{1mm}
\noindent\textbf{Roadmap.} We introduce notation and the geoconsensus problem, and 
establish an impossibility of geoconsensus in  Section~\ref{section:notation}. 
%
We present in Section \ref{section:simple} algorithm {\em BASIC}. 
We discuss covering processes 
in Section \ref{section:covering}.
We then present in Section \ref{section:generic} algorithm {\em GENERIC}. 
In Section~\ref{section:extension}, we extend the results to $d$-dimensional space, $d\geq 3$. In Section~\ref{section::end}, we conclude the paper with a short discussion on future work. 
\onlyShort{Pseudocodes, some details, and many proofs are deferred to Appendix due to space constraints.}

\onlyShort{\vspace{-3mm}}
\section{Notation, Problem Definition, and Impossibility}
\label{section:notation}
\vspace{-2mm}
\noindent{\bf Processes.} A computer system consists of a set $\cP=\{p_1,\ldots,p_N\}$ of $N$ processes. Each process $p_i$ embedded in the 2-dimensional plane has unique planar coordinates $(x_i,y_i)$. 
The coordinates for higher dimensions can be defined accordingly.  
Each process is aware of coordinates of all the other processes and is capable of sending a message to any of them. The sender of the message may not be spoofed. Communication is synchronous. The communication between processes is through oral messages. 

\vspace{1mm}
\noindent{\bf Byzantine faults.}
A process may be either correct or faulty. The fault is Byzantine. A faulty process may behave arbitrarily. This fault is permanent. To simplify the presentation, we assume that all faulty processes are controlled by a unique adversary trying to thwart the system from achieving its task.

\vspace{1mm}
\noindent{\bf Fault area.} 
The adversary controls the processes as follows. Let the \emph{fault area} $F$ be a finite-size convex area in the plane. Let $D$ be the diameter of $F$, i.e. the maximum distance between any two points of $F$. The adversary may place $F$ in any location on the plane. A process $p_i$ is {\em covered} by $F$ if the coordinate $(x_i,y_i)$ of $p_i$ is either in the interior or on the boundary of $F$.  
Every process covered by $F$ is faulty.

\onlyLong{
\begin{table}[!t]
\vspace{-4mm}
{\footnotesize
\centering
\begin{tabular}{l|l}
\toprule
{\bf Symbol} & {\bf Description}  \\
\toprule
$N$; $\cP$; $(x_i,y_i)$ &  number of processes; $\{p_1,\ldots,p_N\}$; planar coordinates of process $p_i$\\
\hline
$F; D$; $\mathcal{F}$ & fault area; diameter of $F$; a set of fault areas $F$ with $|\mathcal{F}|=M$\\
\hline
$f$ & number of faulty processes\\
\hline
$\cP_D$ & processes in $\cP$ such that pairwise distance between them is more than $D$\\
\hline
$A$ (or $A_j(R_i)$); $\cA$ & cover area that is of same shape and size as $F$; a set of cover areas $A$\\
\hline
$n(F)$ & number of cover areas $A\in \cA$ that a fault area $F$ overlaps\\ 
\bottomrule
\end{tabular}
\caption{Notation used throughout the paper.}
\label{table:1}
\vspace{-2mm}
}
\end{table}
}

A \emph{fault area set} or just \emph{fault set} is the set
$\mathcal{F}$
of identical fault areas $F$. The size of this set is $M$, i.e., $|\mathcal{F}|=M$. The adversary controls the placement of all areas in $\mathcal{F}$. 
Correct processes know the shape and  size of the fault areas $F$ as well as $M$, the size of $\mathcal{F}$. However, correct processes do not know the precise placement of the fault areas $\mathcal{F}$. 
For example, if $\mathcal{F}$ contains $4$ fault square fault areas $F$ with the side $\ell$, then correct processes know that there are $4$ square fault areas with side $\ell$ each but do not know where they are located.
Table \ref{table:1} \onlyShort{(in Appendix)} summarizes notation used in this paper.

\vspace{1mm}
\noindent{\bf Byzantine Geoconsensus.}  
Consider the binary consensus where every correct process is input a value $v \in \{0, 1\}$ and must output an irrevocable decision with the following three properties.

\onlyShort{\vspace{-4mm}}
\begin{samepage}
\begin{description}
\item[agreement] -- no two correct processes decide differently;
\item[validity] -- if all the correct processes input the same value $v$, then every correct process decides $v$;
\item[termination] -- every correct process eventually decides.
\end{description}
\end{samepage}
\onlyShort{\vspace{-4mm}}
\begin{definition} An algorithm solves \emph{the Byzantine geoconsensus Problem} (or \emph{geoconsensus} for short) for fault area set $\mathcal{F}$, if every computation produced by this algorithm satisfies the three consensus properties. 
\end{definition}

\noindent{\bf Impossibility of Geoconsensus.}
\label{section:impossibility}
Given a certain set of embedded processes $\cP$ and single area $F$, the \emph{coverage number} $k$
of $\cP$ by $F$ is the minimum number of such areas required to cover each node of $\cP$. We show that geoconsensus is not solvable if the coverage number $k$ is less than $4$. 
When the coverage number is $3$ or less, the problem translates into classic consensus with 3 sets of peers where one of the sets is faulty. Pease {\it et al.}~\cite{Pease80} proved the solution to be impossible. The intuition is that a group of correct processes may not be able to distinguish which of the other two groups is Byzantine and which one is correct. Hence, the correct groups may not reach consensus. 

\vspace{-1mm}
\begin{theorem}[Impossibility of Geoconsensus]
\label{theorem:impossibility}
Given a set $\cP$ of $N\geq 3$ processes and an area $F$,
there exists no algorithm that solves the geoconsensus Problem if the coverage number $k$ of $\cP$ by $F$ is less than $4$.
\end{theorem}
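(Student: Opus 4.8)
The plan is to reduce geoconsensus with coverage number $k\le 3$ to classical oral-message Byzantine agreement among three ``super-generals'' of which one may be faulty, and then invoke the impossibility of Pease {\it et al.} \cite{Pease80} for $n\le 3f$ (here $n=k=3$ and $f=1$). First I would fix the boundary case $k=3$ (the cases $k<3$ are dispatched at the end). Since $k=3$, there is a cover of $\cP$ by three copies $A_1,A_2,A_3$ of $F$; assigning each process to one area that covers it partitions $\cP$ into groups $G_1,G_2,G_3$. Because a genuine coverage number of $3$ forces these groups to be mutually separated (otherwise two of them would fit in a single copy of $F$, contradicting minimality of the cover), a single placement of the fault area can be made to coincide with exactly one $A_i$, corrupting all of $G_i$ while leaving $G_j,G_l$ (for $j,l\neq i$) entirely correct. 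This is consistent with the model having a single fault area ($M=1$).

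Next I would set up the simulation. Suppose, for contradiction, that an algorithm $\Pi$ solves geoconsensus for this $\cP$ and $F$. I build a protocol $\Pi'$ for three generals $g_1,g_2,g_3$ (one Byzantine) as follows: each $g_i$ locally simulates every process of $G_i$, feeds its own input value to all of them, and runs their code; messages exchanged between processes of different groups are relayed over the real channels between the corresponding generals, while intra-group messages are simulated internally. A correct general thus simulates an all-correct group, and a Byzantine general can drive its group to send arbitrary, mutually inconsistent oral messages --- precisely the behavior available to processes inside the fault area. Each general outputs the common decision of its simulated group. By the \emph{agreement}, \emph{validity}, and \emph{termination} of $\Pi$, the two correct generals decide a common value, equal to the generals' common input whenever all correct inputs agree; hence $\Pi'$ solves $3$-general Byzantine agreement with $f=1$, contradicting \cite{Pease80}.

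The crux of the argument --- and the step I expect to be the main obstacle --- is establishing the exact correspondence between a group and a single general in the oral-message model, together with the geometric realizability of the adversary. On the behavioral side I must argue that an entirely Byzantine group can emulate any single Byzantine general (it can send whatever it likes to each of the other two groups) while a correct group behaves as a deterministic reliable participant, so that the indistinguishability at the heart of \cite{Pease80} carries over: the correct groups cannot tell which of the other two groups is the fault area, since $\Pi$ is given only the shape, size, and number ($M=1$) of fault areas but never the placement of $F$. On the geometric side I must verify that there is an actual placement of $F$ realizing each of the three ``one group faulty'' scenarios against the \emph{same} embedding $\cP$, and this is exactly where the separation of the three groups forced by $k=3$ is used.

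Finally, the cases $k<3$ are handled by the identical reduction with fewer groups: an instance with coverage number $2$ maps to two super-generals with one possibly faulty, i.e. the $n=2<3f$ regime, which is impossible by the same argument, and the degenerate case $k=1$ is subsumed since fewer covering areas only enlarge the share of the system a single fault area may corrupt, strengthening the adversary. I would present the $k=3$ reduction in full and remark that the smaller $k$ follow a fortiori.
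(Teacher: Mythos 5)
Your high-level strategy --- reducing to the oral-message impossibility of Pease \emph{et al.} for three participants, one of which may be faulty --- is the same as the paper's, but your argument has a genuine gap at exactly the step you yourself call the crux. You assert that coverage number $3$ ``forces these groups to be mutually separated,'' justified by the observation that otherwise two groups would fit in a single copy of $F$. That observation only rules out two groups being \emph{jointly coverable} by one copy of $F$; it does not rule out the three cover areas $A_1,A_2,A_3$ overlapping, nor a process assigned to $G_j$ lying inside $A_i\cap A_j$. In the paper's model every process covered by the placed fault area \emph{is} faulty and forfeits all guarantees, so in the scenario where $F$ coincides with $A_i$, any member of $G_j$ located inside $A_i$ is faulty even though your simulation has it follow the protocol. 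Consequently the geoconsensus algorithm $\Pi$ does not guarantee that general $g_j$'s simulated processes reach a common decision, and the step ``each general outputs the common decision of its simulated group,'' on which the whole reduction rests, breaks down. The gap is repairable: by minimality of the cover, each $A_i$ contains at least one process covered by no other area (otherwise two areas would suffice to cover $\cP$), and if each general instead outputs the decision of such an \emph{exclusive} process, that process is genuinely uncovered (hence correct) in every scenario in which its own general is honest, so agreement and validity of $\Pi$ apply to it and the reduction goes through. Some argument of this kind must be added.

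A second, smaller problem is your treatment of $k<3$. The claim that $k=1$ is ``subsumed'' because a larger corruptible share ``strengthens the adversary'' is backwards for impossibility proofs: enlarging the set of covered processes strips guarantees from more processes, and the contradiction must be derived from processes that still retain guarantees. For $k\le 2$ you must still exhibit, for each group, a placement of $F$ that covers that group while leaving designated processes of the other group(s) uncovered; this requires a short geometric argument, not an a fortiori remark. Note that the paper sidesteps all of these issues by proving a weaker-looking but sufficient statement: it \emph{constructs} the hard instance --- three (effectively disjoint) areas with $\kappa$ processes embedded in each --- and invokes Pease \emph{et al.} directly, so no separation property of arbitrary minimal covers is ever needed. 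Your version aims at the stronger, universally quantified reading of the theorem, which is a legitimate and more general goal, but it then owes exactly the geometric lemmas that you elided.
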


\onlyLong{
\begin{proof}
Set $N=3\cdot \kappa$, for some positive integer $\kappa\geq 1$. Place three areas $A$ on the plane in arbitrary locations.  To embed processes in $\cP$, consider a bijective placement function $f:\cP\rightarrow\cA$ such that $\kappa$ processes are covered by each area $A$. Let $v$ and $v'$ be two distinct input values $0$ and $1$. Suppose one area $A$ is fault area, meaning that all $\kappa$ processes in that area are faulty. 

This construction reduces the Byzantine goeconsensus problem to the impossibility construction for the classic Byzantine consensus problem given in the theorem in Section 4 of Pease {\it et al.}~\cite{Pease80} for the $3\kappa$ processes out of which $\kappa$ are Byzantine.
\qed
\end{proof}
}

\onlyLong{
\begin{algorithm}[!t]
{\small
{\bf Setting:} A set $\cP$ of $N$ processes positioned at distinct coordinates. Each process can communicate with all other processes and knows their coordinates.  There are $M\geq 1$ identical fault areas $F$. The diameter of a fault area is $D$. The locations of any area $F$ is not known to correct processes. Each process covered by any $F$ is Byzantine.\\
{\bf Input:} Each process has initial value either 0 or 1.\\
{\bf Output:} Each correct process outputs decision subject to Geoconsensus.\\
{\bf \em Procedure for process $p_k\in \cP$}\\
// leaders selection \\
Let $P_D \leftarrow \emptyset$, $P_C  \leftarrow \cP$;  \\  
\While{$P_C \neq \emptyset$}{
    let $P_3 \subset P_D$ be a set of processes such that $\forall p_j \in P_3$, 
        $\mathit{Nb}(p_j, D)$ has distance $D$ independent set of at most 3; \\
    let $p_i \in P_3$, located in $(x_i, y_i)$ be
    the lexicographically smallest process in $P_3$, i.e. $\forall p_j \neq p_i \in P_3:$ located in $(x_j, y_j)$ either  $x_i<x_j$ or $x_i=x_j$ and $y_i<y_j$; \\
    add $p_i$ to $P_D$; \\
    remove $p_i$ from $P_C$; \\
   $\forall p_j \in \mathit{Nb}(p_i, D)$ remove $p_j$ from $P_C$; \\
}

// consensus \\
\eIf{$p_k \in P_D$}
{run \emph{PSL} algorithm, achieve decision $v$, broadcast $v$, output $v$;}
{
wait for messages with identical decision $v$ from at least $2M + 1$ processes from $\cP_D$, output $v$; 
}

\caption{\emph{BASIC} geoconsensus algorithm.}
\label{algorithm:simple}
}
\end{algorithm}	

}

\section{The BASIC Geoconsensus Algorithm}
\label{section:simple}
In this section, we present the algorithm we call \emph{BASIC} that solves geoconsensus for up to
$f<N-(2M+1),M\geq 1$ faulty processes located in fault area set $\mathcal{F}$ of size $|\mathcal{F}|=M$ provided that $\cP$ contains at least $9M+3$ processes such that the pairwise distance between them is greater than the diameter $D$ of the fault areas $F\in  \mathcal{F}$.

The pseudocode of \emph{BASIC} is shown in Algorithm \ref{algorithm:simple}\onlyShort{~(in Appendix)}. 
It contains two parts: the leaders selection and the consensus procedure. 
The first component is the selection of leaders. So as to not be covered by $F$ jointly, the leaders need to be located pairwise distance more than $D$ away from each other. Finding the largest set of such leaders is equivalent to computing the maximal independent set in a unit disk graph. This problem is known to be NP-hard~\cite{clark1990unit}. We, therefore, employ a greedy heuristic. 

For the leaders selection, for each process $p_i$, denote by $\mathit{Nb}(p_i,D)$ the distance $D$ neighborhood of $p_i$. That is, $p_j \in \mathit{Nb}(p_i,D)$ if $d(p_i, p_j) \leq D$. A distance $D$ independent set for a planar graph is a set of processes such that all processes in the planar graph are at most $D$ away from the processes in this independent set. It is known~\cite[Lemma 3.3]{marathe1995simple} that every distance $D$ graph has a neighborhood whose induced subgraph contains any independent set of size at most 3. 

The set of leaders $\cP_D \subset \cP$ selection procedure operates as follows. A set $P_C$ of leader candidates is processed. At first, all processes are candidates. All processes whose distance $D$ neighborhood induce a subgraph with an independent set no more than $3$ are found. The process $p_i$ with lexicographically smallest coordinates, i.e. the process in the bottom left corner, is selected into the leader set $\cP_D$. Then, all processes in $\mathit{Nb}(p_i,D)$ are removed from the leader candidate set $\cP_C$. This procedure repeats until $\cP_C$ is exhausted.

The second part of \emph{BASIC} relies on the classic consensus algorithm of Pease {\it et al.}~\cite{Pease80}. We denote this algorithm as \emph{PSL}. The input of \emph{PSL} is the set of $3f+1$ processes such that at most $f$ of them are faulty as well as the input $1$ or $0$ for each process. As output, the correct processes provide the decisions value subject to the three properties of the solution to consensus.  \emph{PSL} requires $f+1$ communication rounds. 

The complete \emph{BASIC} operates as follows. All processes select leaders in $P_D$. Then, the leaders run \emph{PSL} and broadcast their decision. The rest of the correct processes, if any, adopt this decision.

\vspace{2mm}
\noindent{\bf Analysis of \emph{BASIC}.}
The observation below is immediate since all processes run exactly the same deterministic leaders selection procedure. 
\begin{observation}
For any two processes $p_i,p_j\in \cP$,  set $P_D$ computed by $p_i$ is the same as set $P_D$ computed by $p_j$.
\end{observation}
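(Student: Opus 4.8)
The plan is to prove the claim by induction on the iterations of the \texttt{while} loop in the leaders selection phase, showing that at the start and end of every iteration each process holds identical copies of the working sets $P_C$ and $P_D$. The one fact that makes this work is that the leaders selection phase consumes no input other than the coordinates of all $N$ processes, and by assumption every process knows these coordinates exactly. No process draws randomness, consults private state, or reads any received message during this phase, so the entire computation is a deterministic function of the globally shared coordinate data.

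First I would record the base case: immediately before the loop every process sets $P_C \leftarrow \cP$ and $P_D \leftarrow \emptyset$, so the two sets agree trivially across all processes. For the inductive step, assume that at the start of some iteration the copies of $P_C$ and $P_D$ held by $p_i$ and $p_j$ coincide. The candidate subset $P_3$ is defined through the neighborhoods $\mathit{Nb}(\cdot, D)$ and the inclusion of distance-$D$ independent sets of size at most $3$; each $\mathit{Nb}(p, D)$ is determined solely by pairwise distances, which are a function of the common coordinates, so the two processes compute the same $P_3$. The process $p$ added to $P_D$ is then the lexicographically smallest element of $P_3$, and this tie-breaking rule is crucial: because the coordinates are pairwise distinct, lexicographic comparison on $(x,y)$ is a strict total order and the minimum is unique. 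Hence $p_i$ and $p_j$ select the very same $p$, add it to their (identical) $P_D$, and remove $p$ together with all of $\mathit{Nb}(p, D)$ — again a fixed function of the coordinates — from their (identical) $P_C$. The invariant is thereby restored at the end of the iteration.

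Since the loop terminates exactly when $P_C = \emptyset$, and both processes maintain identical $P_C$ throughout, they terminate on the same iteration with identical $P_D$, which is precisely $\cP_D$. The only point that needs care — and the closest thing to an obstacle — is verifying that \emph{every} decision point inside the loop reads only from the shared coordinate data and from the identically maintained sets, with no implicit dependence on the executing process; the uniqueness of the lexicographic minimum, guaranteed by distinct coordinates, is what rules out any divergence at the single nondeterministic-looking step. Once that is checked, the observation is simply a formalization of the principle that the same deterministic procedure run on the same input yields the same output.
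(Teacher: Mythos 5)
Your proof is correct and follows essentially the same reasoning as the paper, which simply notes that the observation is immediate because all processes run exactly the same deterministic leaders selection procedure on the same globally known coordinate data. Your loop-invariant induction, including the check that the lexicographic minimum is unique since coordinates are distinct, is just a careful unpacking of that one-line determinism argument.
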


\begin{lemma}\label{lemPD}
If $\cP$ contains at least $3x$ processes such that the distance between any pair of such processes is  $>D$, then the size of $P_D$ computed by processes in BASIC is $\geq x$.
\end{lemma}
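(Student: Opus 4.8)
The plan is to use a charging argument. Let $S \subseteq \cP$ be a witness set of $3x$ processes that are pairwise at distance strictly greater than $D$. I will track how the elements of $S$ are eliminated from the candidate set $P_C$ as the greedy leaders-selection loop runs, and show that no single iteration can remove more than three of them. Since every element of $S$ must be gone by the time $P_C$ is empty, and each iteration contributes exactly one process to $P_D$, this forces at least $x$ iterations and hence $|P_D| \geq x$.

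First I would establish the crucial per-iteration bound. In a single iteration the procedure selects a leader $p_i$ and removes the entire neighborhood $\mathit{Nb}(p_i,D)$ from $P_C$. The elements of $S$ removed in this step all lie in $\mathit{Nb}(p_i,D)$; moreover, being a subset of $S$, they are pairwise more than $D$ apart and therefore form an independent set in the distance-$D$ graph. By the cited property of distance-$D$ graphs (\cite[Lemma 3.3]{marathe1995simple}), the neighborhood $\mathit{Nb}(p_i,D)$ can contain an independent set of size at most $3$. Consequently $|\mathit{Nb}(p_i,D)\cap S| \leq 3$, so at most three elements of $S$ leave $P_C$ during this iteration. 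This accounts for both ways an $S$-element can be eliminated -- by being selected as the leader $p_i$ itself or by being one of its neighbors -- since $p_i \in \mathit{Nb}(p_i,D)$.

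Next I would close the count. The loop runs until $P_C = \emptyset$, at which point every process of $S$ has been removed, so the total number of $S$-elements removed across all iterations equals $|S| = 3x$. Letting $T$ denote the number of iterations, the per-iteration bound gives $3x \leq 3T$, i.e.\ $T \geq x$. Since each iteration adds exactly one process to $P_D$, we conclude $|P_D| = T \geq x$, as required.

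The main obstacle I anticipate is purely in the correct invocation of the Marathe \emph{et al.} lemma: I must make sure its statement about independent sets in distance-$D$ graph neighborhoods is applied to the right object, namely the intersection $\mathit{Nb}(p_i,D)\cap S$ viewed as an independent set, rather than to the full neighborhood. Everything else is elementary counting and does not depend on which process the greedy rule actually selects at each step.
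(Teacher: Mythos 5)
Your proof is correct, but it takes a genuinely different route from the paper's. The paper's proof is a two-line appeal to \cite[Theorem 4.7]{marathe1995simple}: the greedy leaders-selection heuristic in \emph{BASIC} is the one analyzed there, and it is guaranteed to output a distance-$D$ independent set of size at least one third of the maximum one; since the hypothesis supplies an independent set of size $3x$, the optimum is at least $3x$, hence $|P_D|\geq x$. You instead re-derive the relevant part of that approximation guarantee from scratch: your charging argument --- each iteration removes at most three members of the witness set $S$ from $P_C$, because $S\cap \mathit{Nb}(p_i,D)$ is an independent set inside a neighborhood whose independent sets have size at most $3$ --- is precisely the engine behind the $3$-approximation, instantiated directly against the pseudocode of Algorithm 1. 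One point of precision: the bound on independent sets in $\mathit{Nb}(p_i,D)$ is not a property of \emph{every} vertex of a distance-$D$ graph, which is how your appeal to \cite[Lemma 3.3]{marathe1995simple} reads; it holds for the selected leader $p_i$ because the algorithm restricts its choice to the set $P_3$ of vertices having exactly that property, and the cited lemma's role is only to guarantee that $P_3$ is nonempty so each iteration can proceed. With that attribution corrected, your argument is sound, and it buys something the paper's does not: it is self-contained, it verifies the bound against the algorithm as actually written rather than trusting that it matches the cited heuristic, and it makes plain that the conclusion is independent of the lexicographic tie-breaking rule. The paper's version is shorter but opaque by comparison.
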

\begin{proof}
For the same problem like ours, in~\cite[Theorem 4.7]{marathe1995simple}, it is proven that the heuristic we use for the leaders selection provides a distance $D$ independent set $P_D$ whose size is no less than a third of optimal size. Thus, $x \leq |P_D|$. The lemma follows.
\qed
\end{proof}

\begin{lemma}
\label{lemma:no-two-process-in-one-area}
Consider a fault area $F$ with diameter $D$. No two processes in $\cP_D$ are covered by $F$. 
\end{lemma}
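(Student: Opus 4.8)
The plan is to argue by contradiction, combining two facts: the geometric separation property built into the leader set $\cP_D$, and the definition of the diameter $D$. Suppose, toward a contradiction, that two distinct processes $p_i, p_j \in \cP_D$ are both covered by a single fault area $F$.

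First I would establish that any two distinct leaders in $\cP_D$ are strictly more than $D$ apart. This follows directly from the leaders selection loop: whenever a process $p_\ell$ is added to $\cP_D$, every process in $\mathit{Nb}(p_\ell, D)$---that is, every candidate within distance $D$ of $p_\ell$---is removed from the candidate set $P_C$. Hence a later leader can survive in $P_C$ and be selected only if it lies at distance $> D$ from every previously chosen leader. Applying this to each pair in the order they are added yields $d(p_i, p_j) > D$ for every pair of distinct leaders in $\cP_D$.

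Next I would invoke the definition of the diameter. Since $p_i$ and $p_j$ are both covered by $F$, their coordinates $(x_i,y_i)$ and $(x_j,y_j)$ lie in the closure of $F$ (interior or boundary). Because $D$ is by definition the maximum distance between any two points of $F$, this immediately gives $d(p_i, p_j) \leq D$. Combining with the previous step produces $D < d(p_i, p_j) \leq D$, a contradiction. Therefore no two processes of $\cP_D$ can be covered by the same fault area $F$.

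I do not anticipate a substantial obstacle: the lemma is essentially a direct consequence of the $>D$ separation enforced during the construction of $\cP_D$ together with the diameter bound on $F$. The only point requiring care is to make the separation claim precise---namely, to confirm that the removal step guarantees the \emph{strict} inequality $d(p_i,p_j) > D$ rather than merely $d(p_i,p_j) \geq D$ for distinct leaders. This is exactly what the definition $\mathit{Nb}(p_\ell, D) = \{p_m : d(p_\ell, p_m) \leq D\}$ ensures, since any process at distance exactly $D$ from an already-selected leader is removed from $P_C$ and thus can never itself become a leader.
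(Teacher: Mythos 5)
Your proof is correct and follows essentially the same argument as the paper: the construction of $\cP_D$ (removal of the entire distance-$D$ neighborhood $\mathit{Nb}(p_i,D)$, defined with $\leq D$) forces strict pairwise separation $d(p_i,p_j)>D$, which contradicts the diameter bound $d(p_i,p_j)\leq D$ for any two points covered by $F$. The paper states this in two lines, taking the separation property of $\cP_D$ as given; you additionally verify it from the algorithm, including the strictness at distance exactly $D$, which is a sound elaboration rather than a different approach.
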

\begin{proof}
For any two processes $p_i,p_j\in \cP_D$, $d(p_i,p_j)>D$. Since any area $F$ has diameter $D$, no two processes $>D$ away can be covered by $F$ simultaneously.
\qed
\end{proof}

\begin{theorem} 
\label{theorem:basic}
Algorithm BASIC solves the Byzantine geoconsensus Problem for a fault area set $\mathcal{F}$, the size of $M \geq 1$  with fault areas $F$ with diameter $D$ for $N$ processes in $\cP$ tolerating $f\leq N-(2M+1)$ Byzantine faults provided that $\cP$ contains at least $9M + 3$ processes such that their pairwise distance is more than $D$. The solution is achieved in $M + 2$ communication rounds. 
\end{theorem}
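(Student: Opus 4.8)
The plan is to follow the set of leaders $P_D$ through both phases of \emph{BASIC} and argue that the correct leaders reach agreement via \emph{PSL} while every correct non-leader safely copies that agreement. First I would pin down the size of the leader set: the hypothesis supplies $9M+3 = 3(3M+1)$ processes that are pairwise more than $D$ apart, so Lemma~\ref{lemPD} applied with $x = 3M+1$ gives $|P_D| \geq 3M+1$. By the Observation, every process computes the very same $P_D$, so each leader knows it is running \emph{PSL} alongside exactly the other members of $P_D$, and each non-leader knows precisely which processes to listen to.

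Next I would bound the number of faulty leaders. There are $M$ fault areas in $\mathcal{F}$, and by Lemma~\ref{lemma:no-two-process-in-one-area} no single fault area can cover two members of $\cP_D$; hence at most $M$ leaders are faulty, leaving at least $(3M+1)-M = 2M+1$ correct leaders. Since $|P_D| \geq 3M+1$ with at most $M$ faulty members, the \emph{PSL} precondition ($3f'+1$ processes tolerating $f' = M$ faults) is satisfied, so all correct leaders decide a common value $v$ and terminate in $M+1$ rounds. Validity transfers cleanly from $\cP$ to the leaders: if every correct process inputs the same $v$, then in particular every correct leader inputs $v$, and \emph{PSL} validity forces the leaders' decision to be $v$.

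The delicate step is the non-leader adoption rule, which I would settle with a counting argument. Each of the $\geq 2M+1$ correct leaders broadcasts the common decision $v$, so every correct non-leader receives at least $2M+1$ identical messages $v$ and meets its threshold. Conversely, any competing value $v' \neq v$ can be reported only by the at most $M$ faulty leaders, and $M < 2M+1$, so no spurious value ever reaches the threshold; each correct non-leader therefore outputs exactly $v$. Combined with the leaders' decision, this yields \textbf{agreement} and \textbf{validity} for all correct processes, while \textbf{termination} holds because adoption occurs one broadcast round after \emph{PSL} completes, for $M+2$ rounds in total. The stated fault bound then emerges as a consequence rather than an assumption: the $\geq 2M+1$ correct leaders are themselves correct processes, so at most $N-(2M+1)$ processes are faulty, giving $f \leq N-(2M+1)$.

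I expect the main obstacle to be making the non-leader counting argument airtight, namely confirming simultaneously that the correct leaders alone furnish the $2M+1$ matching votes and that the $M$ faulty leaders are too few to manufacture a competing block of the same size; a secondary point to verify carefully is that the \emph{PSL} fault parameter is fixed at $M$, so that its round count is exactly $M+1$ and the overall bound of $M+2$ rounds follows.
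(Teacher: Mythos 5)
Your proof is correct and follows essentially the same route as the paper's: Lemma~\ref{lemPD} yields $|P_D|\geq 3M+1$, Lemma~\ref{lemma:no-two-process-in-one-area} caps the faulty leaders at $M$, \emph{PSL} gives agreement among the $\geq 2M+1$ correct leaders in $M+1$ rounds, and one final broadcast round lets non-leaders adopt, for $M+2$ rounds total. Your explicit counting argument for the non-leader threshold (only $v$ can gather $2M+1$ matching votes, since the $\leq M$ faulty leaders cannot manufacture a competing block) simply fills in a step the paper leaves implicit.
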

\begin{proof} If $\cP$ contains at least $9M +3$ processes whose pairwise distance is more than $D$, then, according to Lemma~\ref{lemPD}, each processes in \emph{BASIC} selects $P_D$ such that $|P_D|\geq 3M + 1$.
We have $M\geq 1$ fault areas, i.e., $|\mathcal{F}|=M$.  From Lemma \ref{lemma:no-two-process-in-one-area}, a process $p\in \cP_D$ can be covered by at most one fault area $F$. Therefore, when $|P_D|\geq 3M+1$, then it is guaranteed that even when $M$ processes in $\cP_D$ are Byzantine, $2M+1$ correct processes in $\cP_D$ can reach consensus using \emph{PSL} algorithm. 

In the worst case, the adversary may position fault areas of $\mathcal{F}$ such that all but $2M+1$ processes in $\cP$ are covered. Hence, \emph{BASIC} tolerates $N - (2M+1)$ faults. 

Let us address the number of rounds that \emph{BASIC} requires to achieve geoconsensus. It has two components executed sequentially: leaders election and \emph{PSL}. Leaders election is done independently by all processes and requires no communication. \emph{PSL}, takes $M+1$ rounds for the $2M + 1$ leaders to arrive at the same decision. It takes another round for the leaders to broadcast their decision. Hence, the total number of rounds is $M + 2$.
\qed
\end{proof}

\onlyShort{\vspace{-3mm}}
\section{Covering Processes}
\label{section:covering}
\vspace{-2mm}
In this section, in preparation for describing the {\em GENERIC} geoconsensus algorithm, we discuss techniques of covering processes by axis-aligned squares and circles. These techniques vary depending on the shape and alignment of the fault area $F$.  

\vspace{1mm}
\noindent{\bf Covering by Squares.}
\label{subsection:square-covers}
The algorithm we describe below covers the processes by square areas $A$ of size $\ell\times\ell$, assuming that the fault areas $F$ are also squares of the same size.  
Although $F$ may not be axis-aligned, we use axis-aligned areas $A$ for the cover and later determine how many such axis-aligned areas $A$ that possibly non-axis-aligned fault area $F$ may overlap. 
Let $A$ be positioned on the plane such that the coordinate of its bottom left corner is $(x_1,y_1)$. The coordinates of its top left, top right, and bottom right corners are respectively $(x_1,y_1+\ell),(x_1+\ell,y_1+\ell),$ and $(x_1+\ell,y_1)$.

Let process $p_i$ be at coordinate $(x_i,y_i)$. We say that $p_i$ is {\em covered} by $A$ if and only if $x_1\leq x_i\leq x_1+\ell$ and $y_1\leq y_i\leq y_1+\ell$.  We assume that $A$ is {\em closed}, i.e., process $p_i$ is assumed to be covered by $A$ even when $p_i$ is positioned on the boundary of $A$. 

We first formally define the covering problem by square areas $A$, which we denote by SQUARE-COVER. Let $\cA$ be a set of square areas $A$. We say that $\cA$ completely covers all $N$ processes if each $p_i\in \cP$ is covered by at least one square of $\cA$. 

\begin{definition}[The  SQUARE-COVER problem]
\label{definition:square-cover}
Suppose $N$ processes are embedded into a 2d-plane such that the coordinates of each process are unique. The SQUARE-COVER problem  is to determine if a certain number of square areas $A=\ell \times \ell$ can completely cover these $N$ processes.  
\end{definition}

\begin{theorem}
\label{theorem:square-cover}
SQUARE-COVER is NP-Complete.
\end{theorem}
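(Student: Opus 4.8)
The plan is to establish both directions required for NP-completeness: membership in NP and NP-hardness. Here the decision version fixes a target count $k$ and asks whether $k$ squares of side $\ell$ suffice to cover the $N$ processes of $\cP$.

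\textbf{Membership in NP.} The subtlety is that the cover squares $A$ may sit at arbitrary real coordinates, so a naive witness would be continuous. First I would discretize the search space. Starting from any valid cover, assign every process to one square that covers it, then normalize each square by sliding it downward until its bottom edge passes through its lowest assigned process and afterwards leftward until its left edge passes through its leftmost assigned process. Because all processes assigned to a square lie in a band of height at most $\ell$ and width at most $\ell$, these moves keep every assigned process covered, and they force the bottom-left corner to a point $(x_b,y_a)$ whose two coordinates are drawn from the coordinate set of $\cP$. Thus only $O(N^2)$ canonical placements ever need to be considered; a witness that names $k$ of them has polynomial size, and verifying that the chosen squares cover all $N$ processes is immediate. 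Hence SQUARE-COVER is in NP.

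\textbf{NP-hardness.} I would reduce from PLANAR 3-SAT. Given a planar 3-CNF formula $\phi$, I fix a planar grid embedding of its variable-clause incidence graph and replace the embedded features by point gadgets built from processes, choosing a side length $\ell$ far smaller than the inter-gadget spacing so that no single square can cover processes belonging to two different gadgets. A variable becomes a closed cycle of processes placed so that the cycle admits exactly two minimum-size covers whose squares alternate in one of two \emph{phases}; the phase encodes the truth value. Each wire joining a variable to a clause is a path of processes that propagates the phase, and each clause is a gadget containing a designated process that can be absorbed within the clause's local square budget if and only if at least one of its three incident literal-wires arrives in the satisfying phase, and otherwise costs one extra square. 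Setting the global target $k$ equal to the sum of the per-gadget minimum counts under mutually consistent phases, the non-interference of squares makes the budget decompose gadget by gadget, so $\cP$ is coverable by $k$ squares exactly when some phase assignment simultaneously satisfies every clause, that is, exactly when $\phi$ is satisfiable. Because a planar graph admits a grid embedding of polynomial size, all coordinates stay polynomial and the reduction runs in polynomial time.

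\textbf{Main obstacle.} The delicate part is the gadget engineering: I must place the processes in each variable loop, each wire, and each clause so that every gadget has precisely the intended pair of minimum phase covers with no cheaper irregular alternative, so that phases propagate consistently through the wire-variable and wire-clause junctions, and so that the clause gadget's one-square saving is triggered exactly on satisfaction. Equally important is the global counting argument, namely proving that the optimum over all real placements equals the sum of the local optima; this rests entirely on the spacing guarantee that forbids one square from straddling two gadgets. Verifying these exact-count properties, rather than merely the qualitative behavior of each gadget, is where the real effort lies, while the planar layout and the polynomial-coordinate bookkeeping are routine once the gadgets are fixed.
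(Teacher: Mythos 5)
Your strategy is a genuinely different route from the paper's: the paper proves this theorem in one step, by observing that SQUARE-COVER is the BOX-COVER problem of Fowler, Paterson and Tanimoto (their boxes are axis-aligned squares of side $2$, i.e.\ the case $\ell=2$), which is already NP-complete via a reduction from 3-SAT, so NP-completeness transfers immediately. You instead set out to rebuild such a reduction from scratch. Your NP-membership argument is correct and is actually more explicit than anything in the paper: canonicalizing each square so that its bottom and left edges pass through assigned processes restricts attention to $O(N^2)$ candidate placements whose corners are drawn from process coordinates, giving a polynomial-size, polynomial-time-checkable witness. (Minor slip: to bring the bottom edge onto the lowest assigned process you must slide the square \emph{upward}, not downward; coverage is preserved exactly as you argue.)

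The genuine gap is in the NP-hardness half: it is a plan, not a proof. Everything rests on variable, wire, and clause gadgets that you never construct --- you only list the properties they must have (a variable cycle with exactly two minimum-size ``phase'' covers and no cheaper irregular cover, faithful phase propagation along wires and through junctions, a clause gadget that saves exactly one square if and only if some incident literal arrives in the satisfying phase, and a global budget that decomposes additively across gadgets). These exact-count properties are precisely the technical content of the theorem; asserting that point configurations with these properties exist is essentially asserting the theorem itself. You flag this yourself as ``where the real effort lies,'' but a complete proof must exhibit explicit gadget coordinates, verify the local optimality claims for each gadget, and prove the additivity of optima under the spacing assumption. Without that, the proposal establishes membership in NP and nothing more. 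This gadget engineering is exactly the work the paper sidesteps by citing Fowler {\it et al.}, whose result consists of precisely such a construction; an alternative repair of your write-up would be to do the same, reducing BOX-COVER to your problem by taking $\ell=2$ (or by scaling all coordinates), which is immediate.
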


\onlyLong{
\begin{proof}
The proof is to show that SQUARE-COVER is equivalent to the BOX-COVER problem which was shown to be NP-Complete by Fowler {\it et al.}~\cite{Fowler81}.
BOX-COVER is defined as follows: There is a set of $N$ points on the plane such that each point has unique integer coordinates. A closed box (rigid but relocatable) is set to be a square with side 2 and is axis-aligned. The problem is to decide whether a set of $k\geq 1$ identical axis-aligned closed boxes are enough to completely cover all $N$ points. 
Fowler {\it et al.} provided a polynomial-time reduction of 3-SAT to BOX-COVER such that $k$ boxes will suffice if and only if the 3-SAT formula is satisfiable.  In this setting,  SQUARE-COVER (Definition \ref{definition:square-cover}) reduces to BOX-COVER for $\ell=2$. Therefore, the NP-Completeness of BOX-COVER extends to SQUARE-COVER. \qed
\end{proof}
}

\noindent{\bf A Greedy Square Cover Algorithm.}
\label{subsection:greedy}
Since SQUARE-COVER is NP-Complete, we use a greedy approximation algorithm to find a set $\cA$ of $k_{greedy}$ axis-aligned square areas  $A=\ell\times \ell$ that completely cover all $N$ processes in $\cP$. We prove that  $k_{greedy}\leq 2\cdot k_{opt}$ (i.e.,  2-approximation), where $k_{opt}$ is the optimal number of axis-aligned squares in any algorithm to cover those $N$ processes. We call this algorithm \emph{GSQUARE}. Each process $p_i$ can run \emph{GSQUARE} independently, because $p_i$ knows all required input parameters for \emph{GSQUARE}. 

\emph{GSQUARE} operates as follows. Suppose the coordinates of process $p_i\in \cP$ are
$(x_i,y_i)$. Let $x_{min}=\min_{1\leq i\leq N}x_i, x_{max}=\max_{1\leq i\leq N}x_i, y_{min}=\min_{1\leq i\leq N}y_i,$ and $y_{max}=\max_{1\leq i\leq N}y_i.$ 
Let $R$ be an axis-aligned rectangle with the bottom left corner at $(x_{min},y_{min})$ and the top right corner at $(x_{max},y_{max})$. 
It is immediate that $R$ is the smallest axis-aligned rectangle that covers all $N$ processes. 
The width of $R$ is $width(R)=x_{max}-x_{min}$ and the height is $height(R)=y_{max}-y_{min}$. See Figure~\ref{fig:rectangle} for illustration.

\begin{figure}[!t]
\vspace{-0mm}
\begin{center}
\includegraphics[height=2.00in]{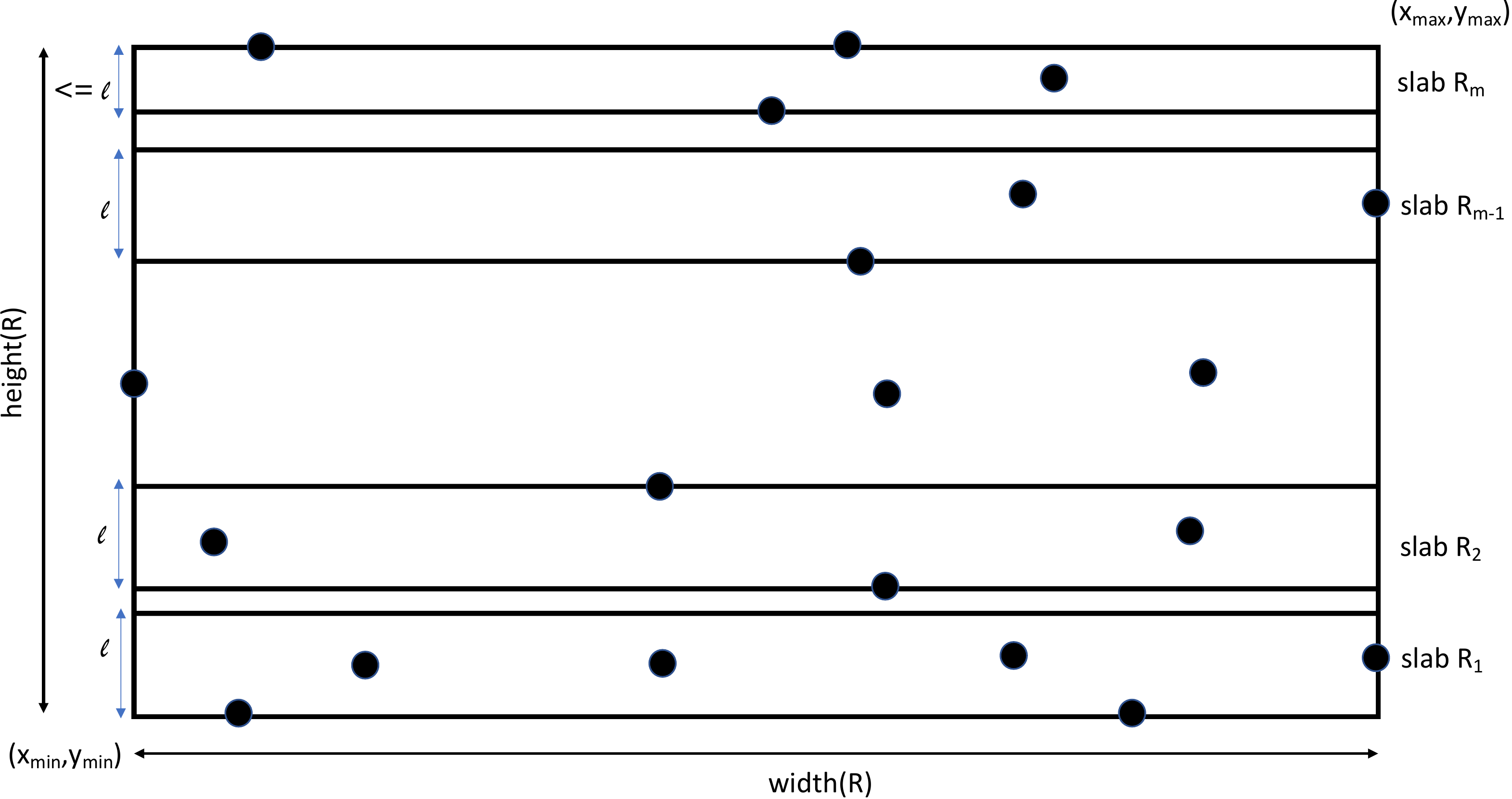}
\end{center}
\vspace{-6mm}
\caption{Selection of axis-aligned smallest enclosing rectangle $R$ covering all $N$ processes in $\cP$ and division of $R$ into axis-aligned slabs $R_i$ of height $\ell$ and width $width(R)$. The slabs are selected such that the bottom side of each slab $R_i$ has at least one process positioned on it.} 
\label{fig:rectangle}
\vspace{-4mm}
\end{figure}

Cover 
rectangle $R$ by a set $\cR$ of $m$ {\em slabs} $\cR=\{R_1,R_2,\ldots,R_m\}$.  The height of each slab $R_i$ is  $\ell$, except for the last slab $R_m$ whose height may be less than  $\ell$. The width of each slab is $width(R)$. That is this width is the same is the width of $R$. 

This slab-covering is done as follows. Let  $y_1=y_{min}+\ell$.
The area of $R$ between two horizontal lines passing through $y_{min}$ to $y_1$ is the first slab $R_1$. 
Now consider only the processes in $R$ that are not covered by $R_1$. Denote that process set by $\cP'$.   
Consider the bottom-most process in $\cP'$, i.e., process $p_{min'}=(x_{min'},y_{min'})\in \cP'$. We have that $y_{min'}>y_{min}+\ell$.
Draw two horizontal lines passing through $y_{min'}$ and $y_{min'}+\ell$. The area of $R$ between these lines is in slab $R_2$. Continue this way until all the points in $\cP$ are covered by a slab. In the last slab $R_m$, it may be the case that its height $height(R_m)<\ell$. 

\begin{figure}[!t]
\vspace{-0mm}
\begin{center}
\includegraphics[height=.6in]{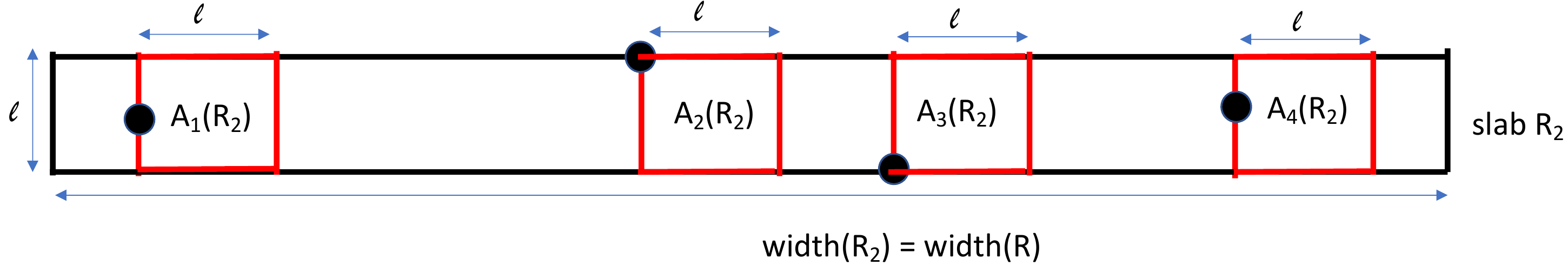}
\end{center}
\vspace{-7mm}
\caption{Selection of axis-aligned areas $A_j(R_2)$ (shown in red) to cover the processes in the slab $R_2$ of Figure~\ref{fig:rectangle}. The left side of each area $A_j(R_2)$ has at least one process positioned on it.}
\label{fig:slab}
\vspace{-5mm}
\end{figure}

So far, we covered $R$ by a set of $m$ slabs $\cR=\{R_1,\ldots,R_m\}$. We now we cover each such slab by axis-aligned square areas $A = \ell \times \ell$. See Figure~\ref{fig:slab} for illustration.
This square-covering is done as follows. Let $R_i$ be a slab to cover.
Put area $A$ on $R_i$ so that the top left corner of $A$ overlaps with the top left corner of slab $R_i$. Slide $A$ horizontally to the right so that there is a process in $R_i$ positioned on the left vertical line of $A$. Fix that area $A$ as one cover square and name it $A_1({R_i})$. Now consider only the points in $R_i$ not covered by  $A_1(R_i)$. It is immediate that those points are to the right of $A_1(R_i)$. Place $A$ on those points so that there is a point in $R_i$ positioned on the left side of $A$. Thus, there is no point of $R_i$ to the left of this second $A$ that is not covered by $A_1({R_i})$). Fix this as the second cover square and name it $A_2({R_i})$. Continue in this manner to cover all the points in $R_i$. Repeat this process for every slab of $R$.

\begin{lemma}
\label{lemma:slab-overlap}
Consider any two slabs $R_i,R_j\in \cR$ produced by GSQUARE.  $R_i$ and $R_j$ do not overlap, i.e., if some process $p\in R_i$, then $p\notin R_j$. 
\end{lemma}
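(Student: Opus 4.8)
The plan is to track each slab by the $y$-interval it occupies and to show that these intervals are pairwise disjoint; since every slab spans the full horizontal extent $[x_{min},x_{max}]$ of $R$, disjoint $y$-intervals immediately yield non-overlapping slabs. I would index the slabs bottom-to-top in their construction order and write $[a_i,a_i+\ell]$ for the vertical extent of slab $R_i$, with $a_1=y_{min}$ (the final slab $R_m$ may be shorter, but a smaller interval only helps disjointness, so this edge case needs no special treatment).

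The heart of the argument is an invariant maintained by the greedy construction: at the moment $R_i$ is created, every process with $y$-coordinate strictly below $a_i$ has already been covered by $R_1,\ldots,R_{i-1}$, and $a_i$ is exactly the $y$-coordinate of the bottom-most not-yet-covered process. This is precisely the selection rule used to place each new slab, so the invariant holds by definition. I would then observe that once $R_i=[a_i,a_i+\ell]$ is added, every process with $y$-coordinate at most $a_i+\ell$ is covered: those below $a_i$ by the invariant, and those in $[a_i,a_i+\ell]$ by $R_i$ itself, where I use that $R_i$ has full width so the horizontal position of a process is irrelevant to whether $R_i$ covers it.

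Consequently, the bottom-most process remaining after $R_i$, whose $y$-coordinate defines $a_{i+1}$, must satisfy $a_{i+1}>a_i+\ell$. From this strict inequality holding for every $i$, the intervals $[a_i,a_i+\ell]$ are pairwise disjoint (indeed separated by strictly positive gaps), so no point of the plane belongs to two slabs. In particular, any process $p\in R_i$ has $y$-coordinate in $[a_i,a_i+\ell]$, which is disjoint from $[a_j,a_j+\ell]$ for every $j\neq i$; hence $p\notin R_j$, which is exactly the claimed statement.

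I expect the only delicate point to be stating the invariant precisely, specifically verifying that no uncovered process can lurk below $a_i$ when $R_i$ is built. This is guaranteed by the \emph{bottom-most uncovered process} selection rule, and once it is in place the strict inequality $a_{i+1}>a_i+\ell$ and the full-width property of each slab reduce the rest of the proof to a one-line consequence.
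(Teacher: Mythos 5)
Your proof is correct and takes essentially the same approach as the paper's: both arguments rest on the observation that, because each slab spans the full width of $R$, the bottom-most process left uncovered after slab $R_i$ lies strictly above its top edge, forcing a positive gap between consecutive slabs. Your write-up merely formalizes this with an explicit invariant and chains the strict inequalities $a_{i+1}>a_i+\ell$ to get pairwise disjointness of all $y$-intervals, whereas the paper states the gap for adjacent slabs and declares that case sufficient.
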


\onlyLong{
\begin{proof}
It is sufficient to prove this lemma for adjacent slabs. Suppose slabs $R_i$ and $R_j$ are adjacent, i.e., $j=i+1$. According to the operation of \emph{GSQUARE}, after the location of $R_i$ is selected, only processes that are not covered by the slabs so far are considered for the selection of $R_j$. The first such process lies above the top (horizontal) side of $R_i$. Hence, there is a gap between the top side of $R_i$ and the bottom side of $R_j$.\qed
\end{proof}
}

\begin{lemma}
\label{lemma:square-overlap}
 Consider any two square areas $A_j({R_i})$ and $A_k(R_i)$ selected by GSQUARE in slab $R_i\in \cR$.  $A_j({R_i})$ and $A_k({R_i})$ do not overlap, i.e., if some process $p\in A_j({R_i})$, then $p\notin A_k({R_i})$.
\end{lemma}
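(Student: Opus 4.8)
The plan is to mirror the proof of Lemma~\ref{lemma:slab-overlap}, exchanging the roles of the vertical (slab-stacking) direction for the horizontal (square-sliding) direction inside a fixed slab $R_i$. As in that lemma, it suffices to establish the claim for two \emph{adjacent} cover squares, say $A_j(R_i)$ and $A_{j+1}(R_i)$: by construction GSQUARE produces the squares of $R_i$ in left-to-right order, so if consecutive squares share no process, then any two squares, being even farther apart horizontally, share none either.

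Next I would record the horizontal extents. Write $a$ for the $x$-coordinate of the left vertical side of $A_j(R_i)$; since each cover square has width $\ell$, the square $A_j(R_i)$ covers exactly the $x$-range $[a, a+\ell]$ within $R_i$. By the operation of GSQUARE, once $A_j(R_i)$ is fixed only the processes of $R_i$ \emph{not} covered by $A_1(R_i),\ldots,A_j(R_i)$ are considered when placing $A_{j+1}(R_i)$, and its left side is slid right until it passes through the leftmost such uncovered process, at $x$-coordinate $b$. Thus $A_{j+1}(R_i)$ covers the $x$-range $[b, b+\ell]$.

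The crux is to justify the strict inequality $b > a + \ell$. Because the cover squares are \emph{closed}, a process lying exactly on the right side of $A_j(R_i)$ (i.e.\ at $x = a+\ell$) is already covered by $A_j(R_i)$; hence the leftmost uncovered process must satisfy $b > a+\ell$ strictly. Consequently the two $x$-intervals $[a,a+\ell]$ and $[b,b+\ell]$ are disjoint, so no process can have an $x$-coordinate that places it inside both squares. Therefore $p \in A_j(R_i)$ implies $p \notin A_{j+1}(R_i)$, and the lemma follows.

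I expect the only real subtlety to be bookkeeping the closed-boundary convention that yields the strict gap $b > a+\ell$; everything else is the direct horizontal analogue of Lemma~\ref{lemma:slab-overlap}, and the reduction to adjacent squares removes any need to argue about non-consecutive pairs.
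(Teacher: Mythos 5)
Your proof is correct and follows essentially the same route as the paper's: reduce to adjacent squares, then observe that $A_{j+1}(R_i)$ is placed with its left side on the first process not covered by $A_j(R_i)$, which—because the squares are closed—must lie strictly to the right of $A_j(R_i)$'s right side, yielding a non-empty gap. Your explicit coordinate bookkeeping ($b > a+\ell$) just spells out the paper's phrase "to the right of the right side of $A_j(R_i)$."
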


\onlyLong{
\begin{proof}
It is sufficient to prove the lemma for adjacent squares. Suppose $A_j({R_i})$ and $A_k({R_i})$ are adjacent, i.e., $k=j+1$. Consider the operation of \emph{GSQUARE} in slab $R_i$ covered by $A_j({R_i})$ and $A_k({R_i})$. Area $A_k({R_i})$ only covers the processes that are not covered by $A_j({R_i})$ and, therefore, to the right of the right side of $A_j({R_i})$. As the left side of $A_k({R_i})$ is placed on the first such process, there is a non-empty gap between the two squares: $A_j({R_i})$ and $A_k({R_i})$.\qed  
\end{proof}
}

\begin{lemma}
\label{lemma:optimal-slab}
Consider slab $R_i\in \cR$. Let $k({R_i})$ be the number of squares $A_j({R_i})$ to cover all the processes in $R_i$ using GSQUARE. There is no algorithm that can cover the processes in $R_i$ with $k'(R_i)$ number of  squares $A_j({R_i})$ such that $k'(R_i)<k({R_i})$. 
\end{lemma}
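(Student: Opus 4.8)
The plan is to reduce the covering of a single slab to a one-dimensional interval point-cover problem and then prove that the greedy placement used by \emph{GSQUARE} is optimal for that problem via a standard packing lower bound. First I would observe that because the slab $R_i$ has height at most $\ell$ and every cover square $A$ has side $\ell$, the vertical coordinate is irrelevant: any axis-aligned square whose top side coincides with the top side of $R_i$ spans the entire vertical extent of $R_i$, so such a square covers a process in $R_i$ if and only if it covers that process's $x$-coordinate. Thus covering the processes of $R_i$ by squares $A=\ell\times\ell$ is equivalent to covering the multiset of their $x$-coordinates by closed intervals of length $\ell$ on the real line, and under this correspondence \emph{GSQUARE} places the left side of each new square on the leftmost not-yet-covered process, which is exactly the greedy interval cover.

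Next I would extract a set of \emph{anchor} processes to certify optimality. Let $A_1(R_i),\ldots,A_{k(R_i)}(R_i)$ be the squares produced by \emph{GSQUARE} in left-to-right order, and let $p_1,\ldots,p_{k(R_i)}$ be the processes on whose left side each successive square is placed. By construction $p_{j+1}$ is the leftmost process not covered by $A_j(R_i)$; since $A_j(R_i)$ covers the $x$-interval $[x_{p_j},x_{p_j}+\ell]$, this forces $x_{p_{j+1}}>x_{p_j}+\ell$. Consequently the anchors have strictly increasing $x$-coordinates with consecutive gaps exceeding $\ell$, so every pair of anchors differs by more than $\ell$ in the $x$-coordinate. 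This is the slab analogue of the non-overlap property established in Lemma~\ref{lemma:square-overlap}.

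Finally I would argue the lower bound. Any cover square has side $\ell$, so its $x$-extent is an interval of length $\ell$; since the $x$-coordinates of any two anchors differ by more than $\ell$, no single square can cover two anchors simultaneously. Hence any algorithm that covers all processes of $R_i$, in particular one using $k'(R_i)$ squares, must devote a distinct square to each of the $k(R_i)$ anchors, yielding $k'(R_i)\geq k(R_i)$. This contradicts the assumed $k'(R_i)<k(R_i)$ and proves the lemma.

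I expect the only subtle point to be the reduction to one dimension: one must verify that the top-aligned placement of the squares really does make the vertical coordinate immaterial for a slab of height at most $\ell$, and that the final, possibly shorter, slab $R_m$ is handled identically. Once that is in place, both the anchor-separation claim and the packing lower bound are routine.
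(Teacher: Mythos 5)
Your proof is correct and takes essentially the same approach as the paper's: both extract the anchor processes lying on the left sides of the successive greedy squares, show that consecutive anchors are separated by more than $\ell$ in the $x$-coordinate, and conclude via the packing argument that no axis-aligned square of side $\ell$ can cover two anchors, so any cover needs at least $k(R_i)$ squares. Your explicit reduction to one-dimensional interval covering is additional framing (and a useful sanity check for the top-aligned placement and the shorter last slab), but it does not change the substance of the argument.
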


\onlyLong{
\begin{proof}
Notice that slab $R_i$ has height $height(R_i)=\ell$ which is the same as the sides of (axis-aligned) squares $A_j(R_i)$ used to cover $R_i$.

\emph{GSQUARE} operates such that it places a square $A$ so that some process $p$ lies on the left side of this square. Consider a sequence of such processes: $\sigma \equiv \langle p_1 \cdots p_u, p_{u+1} \cdots p_j \rangle$.   Consider any pair of subsequent processes $p_u$ and $p_{u+1}$ in $\sigma$ with respective coordinates $(x_u, y_u)$
and $(x_{u+1}, y_{u+1})$. \emph{GSQUARE} covers them with
non-overlapping squares with side $\ell$. Therefore, $x_u + \ell < x_{u+1}$. That is, the distance between consequent processes in $\sigma$ is greater than $\ell$. Hence, any such pair of processes may not be covered by a 
single square. Since the number of squares placed by \emph{GSQUARE} in slab $R_i$ is $k$, the number of processes in $\sigma$ is also $k$. Any algorithm that covers these processes with axis-aligned squares requires at least $k$ squares. \qed
\end{proof}
}

Let $k_{opt}(\cR)$ be the number of axis-aligned square areas $A=\ell\times \ell$ to cover all $N$ processes in $R$ in the optimal cover algorithm. 
We now show that $k_{greedy}(\cR)\leq 2\cdot k_{opt}(\cR)$, i.e., \emph{GSQUARE}  provides 2-approximation. We divide the slabs in the set $\cR$ into two sets $\cR_{odd}$ and $\cR_{even}$. 
\onlyShort{For    $1\leq i\leq m$, let $\cR_{odd}:=\{R_i, i\mod 2\neq 0\} \text{~and~} \cR_{even}:=\{R_i, i\mod 2 = 0\}. $}
\onlyLong{For    $1\leq i\leq m$, let $$\cR_{odd}:=\{R_i, i\mod 2\neq 0\} \text{~and~} \cR_{even}:=\{R_i, i\mod 2 = 0\}. $$}

\begin{lemma}
\label{lemma:odd-even}
Let $k(\cR_{odd})$ and $k(\cR_{even})$ be the total number of (axis-aligned) square areas $A=\ell\times \ell$ to cover the processes in the sets $\cR_{odd}$ and $\cR_{even}$, respectively. Let $k_{opt}(\cR)$ be the optimal number of axis-aligned squares $A=\ell\times \ell$ to cover all the processes in $\cR$.  
$k_{opt}(\cR)\geq \max\{k(\cR_{odd}),k(\cR_{even})\}.$
\end{lemma}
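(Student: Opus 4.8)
The plan is to exploit the vertical geometry of the slabs: I will show that any single axis-aligned square of side $\ell$ can cover processes from \emph{at most one} slab of a given parity, and then combine this with the per-slab optimality already established in Lemma~\ref{lemma:optimal-slab} to lower-bound $k_{opt}(\cR)$ by $k(\cR_{odd})$ and, symmetrically, by $k(\cR_{even})$.

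First I would pin down the vertical separation between slabs of the same parity. By the construction of \emph{GSQUARE}, every slab except possibly the last has height exactly $\ell$, and by Lemma~\ref{lemma:slab-overlap} consecutive slabs are disjoint with a strictly positive gap between the top side of $R_i$ and the bottom side of $R_{i+1}$. Chaining this across two consecutive steps, the bottom of $R_{i+2}$ exceeds the top of $R_i$ by strictly more than $\ell$. Hence for any two slabs $R_i,R_j$ with $|i-j|\geq 2$ (in particular any two distinct odd slabs, or any two distinct even slabs), every process in the higher slab has $y$-coordinate more than $\ell$ above every process in the lower slab.

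Next I would use this separation to bound the optimal cover. Since a closed axis-aligned square of side $\ell$ spans a $y$-interval of length exactly $\ell$, any two processes it covers differ in $y$-coordinate by at most $\ell$; by the separation just established, such a square cannot simultaneously cover processes from two distinct odd slabs. Thus in an optimal cover of \emph{all} processes with $k_{opt}(\cR)$ squares, I can assign each square that covers some process of $\cR_{odd}$ to the unique odd slab it meets, partitioning these squares among the odd slabs. For each odd slab $R_i$, the squares assigned to it cover all processes of $R_i$, so by Lemma~\ref{lemma:optimal-slab} there are at least $k(R_i)$ of them. Summing over odd slabs, and noting these groups are pairwise disjoint, gives $k_{opt}(\cR)\geq \sum_{R_i\in\cR_{odd}} k(R_i) = k(\cR_{odd})$. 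Repeating the identical argument for the even slabs yields $k_{opt}(\cR)\geq k(\cR_{even})$, and taking the maximum proves the claim.

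The main obstacle is the clean justification of the step ``one square meets at most one same-parity slab'': it hinges on the \emph{strict} inequality in the vertical separation, since a difference of exactly $\ell$ would still fit inside a closed square. I must therefore be careful that the gap produced by \emph{GSQUARE} between consecutive slabs is strictly positive, which is what pushes the separation across two slabs strictly above $\ell$. Everything else is straightforward bookkeeping layered on top of Lemmas~\ref{lemma:slab-overlap} and~\ref{lemma:optimal-slab}.
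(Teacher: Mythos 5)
Your proof is correct and takes essentially the same approach as the paper's: processes in two distinct same-parity slabs are separated by more than $\ell$, so no single axis-aligned $\ell\times\ell$ square can cover processes from both, and per-slab optimality (Lemma~\ref{lemma:optimal-slab}) then gives $k_{opt}(\cR)\geq k(\cR_{odd})$ and $k_{opt}(\cR)\geq k(\cR_{even})$ by summing over the disjoint groups of squares. Your write-up is in fact slightly more careful than the paper's, which argues from the Euclidean distance $d(p,p')>\ell$ (by itself insufficient, since a square's diagonal is $\sqrt{2}\ell$), whereas you correctly isolate the \emph{vertical} separation exceeding $\ell$.
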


\onlyLong{
\begin{proof}
Consider two slabs $R_i$ and $R_{i+2}$ for $i\geq 1$. Consider a square $A_j(R_i)$ placed by \emph{GSQUARE}. Consider also two processes $p\in R_i$ and $p'\in R_{i+2}$, respectively.  The distance between $p$ and $p'$ is $d(p,p')>\ell$. 
Therefore, if $A_j(R_i)$  covers $p$, then it cannot cover $p'\in R_{i+2}$.  Therefore, no algorithm can produce the optimal number of squares $k_{opt}(\cR)$ less than the maximum between $k(\cR_{odd})$ and $k(\cR_{even})$. 
\qed
\end{proof}
}

\begin{lemma}
\label{lemma:square-approximation}
$k_{greedy}(\cR)\leq 2\cdot k_{opt}(\cR)$.
\end{lemma}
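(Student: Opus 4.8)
The plan is to express $k_{greedy}(\cR)$ as a sum of per-slab greedy counts, split that sum into its odd-indexed and even-indexed contributions, and then apply Lemma~\ref{lemma:odd-even} to each contribution. The whole argument reduces to the elementary observation that $a + b \leq 2\max\{a,b\}$.

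First I would argue that $k_{greedy}(\cR) = k(\cR_{odd}) + k(\cR_{even})$. By construction, \emph{GSQUARE} covers each slab $R_i$ independently, placing $k(R_i)$ squares in $R_i$. By Lemma~\ref{lemma:slab-overlap} distinct slabs share no process, and by Lemma~\ref{lemma:square-overlap} distinct squares within a slab share no process, so every square produced by \emph{GSQUARE} is associated with exactly one slab and the total count is $\sum_{i=1}^m k(R_i)$. Partitioning the index set into odd and even indices gives $k_{greedy}(\cR) = \sum_{i\text{ odd}} k(R_i) + \sum_{i\text{ even}} k(R_i) = k(\cR_{odd}) + k(\cR_{even})$, matching the quantities named in Lemma~\ref{lemma:odd-even}.

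The remaining step invokes Lemma~\ref{lemma:odd-even}, which states $k_{opt}(\cR) \geq \max\{k(\cR_{odd}), k(\cR_{even})\}$. Combining this with the decomposition above yields
\[
k_{greedy}(\cR) = k(\cR_{odd}) + k(\cR_{even}) \leq 2\max\{k(\cR_{odd}), k(\cR_{even})\} \leq 2\,k_{opt}(\cR),
\]
which is exactly the claimed bound.

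There is no real obstacle here: all the substantive geometric work — that slabs and in-slab squares are disjoint, and that odd/even slabs are far enough apart ($>\ell$) that a single optimal square cannot straddle two slabs of the same parity — has already been discharged by Lemmas~\ref{lemma:slab-overlap}, \ref{lemma:square-overlap}, and \ref{lemma:odd-even}. The only point worth stating carefully is the bookkeeping identity $k_{greedy}(\cR) = k(\cR_{odd}) + k(\cR_{even})$, so that the trivial $a+b \leq 2\max\{a,b\}$ inequality can be applied against the lower bound on $k_{opt}(\cR)$. I would therefore keep the write-up to the three displayed inequalities above after noting the decomposition.
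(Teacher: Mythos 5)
Your proof is correct and follows essentially the same route as the paper's: both decompose the greedy count as $k_{greedy}(\cR)=k(\cR_{odd})+k(\cR_{even})$, invoke Lemma~\ref{lemma:odd-even} for the lower bound $k_{opt}(\cR)\geq \max\{k(\cR_{odd}),k(\cR_{even})\}$, and finish with $a+b\leq 2\max\{a,b\}$. The only cosmetic difference is that the paper also cites the per-slab optimality of Lemma~\ref{lemma:optimal-slab} explicitly, whereas you justify the bookkeeping identity via the disjointness Lemmas~\ref{lemma:slab-overlap} and~\ref{lemma:square-overlap}; the substance is the same.
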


\onlyLong{
\begin{proof}
From Lemma \ref{lemma:optimal-slab}, we obtain that \emph{GSQUARE} is optimal for each slab $R_i$. From Lemma \ref{lemma:odd-even}, we get that for any algorithm $k_{opt}(\cR)\geq \max\{k(\cR_{odd}),k(\cR_{even})\}.$
Moreover, the \emph{GSQUARE} produces the total number of squares $k_{greedy}(\cR)=k(\cR_{odd})+k(\cR_{even}).$
Comparing $k_{greedy}(\cR)$ with $k_{opt}(\cR)$, we get
$$\frac{k_{greedy}(\cR)}{k_{opt}(\cR)}\leq \frac{k(\cR_{odd})+k(\cR_{even})}{\max\{k(\cR_{odd}),k(\cR_{even})\}}\leq \frac{2\cdot \max\{k(\cR_{odd}),k(\cR_{even})\}}{\max\{k(\cR_{odd}),k(\cR_{even})\}} \leq 2. ~~~~~ \qed$$
\end{proof}
}

\vspace{1mm}
\noindent{\bf Covering by Circles.}
Let us formulate the covering by identical circles $C$ of diameter $\ell$, which we denote  CIRCLE-COVER. Let $\cA$ be the set of circles $C$.
We say that $\cA$ completely covers all the processes if every process $p_i\in \cP$ is covered by at least one of the circles in $\cA$. The following result can be established similar to SQUARE-COVER. 

\begin{theorem}
CIRCLE-COVER is NP-Complete.
\end{theorem}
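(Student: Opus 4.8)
The plan is to establish membership in NP and then NP-hardness, paralleling the treatment of SQUARE-COVER. Membership is immediate: given a candidate collection $\cA$ of circles of diameter $\ell$, a verifier checks, for each process $p_i\in\cP$, whether its coordinate lies within Euclidean distance $\ell/2$ of some circle center in $\cA$; this takes time polynomial in $N$ and $|\cA|$. Hence CIRCLE-COVER is in NP, and it remains to exhibit NP-hardness via reduction from a known NP-hard problem.

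For hardness I would again appeal to Fowler {\it et al.}~\cite{Fowler81}, whose construction establishes NP-completeness not only for covering points by unit boxes (the BOX-COVER problem used for SQUARE-COVER) but also for covering points by unit discs, both obtained from the same polynomial-time reduction from 3-SAT. Concretely, define DISC-COVER as the problem of deciding whether $k\geq 1$ identical (relocatable, axis-free) unit discs of diameter $2$ suffice to cover a given set of $N$ points with integer coordinates; Fowler {\it et al.} give a map from a 3-SAT formula to such a point set so that $k$ discs cover it if and only if the formula is satisfiable. Observe that CIRCLE-COVER restricted to diameter $\ell=2$ with integer-coordinate processes is exactly DISC-COVER (and for general $\ell$ one simply rescales all coordinates by $\ell/2$, which preserves every coverage relation). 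Thus DISC-COVER embeds as a special case of CIRCLE-COVER, and its NP-completeness transfers directly.

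The main obstacle, and the point that genuinely differs from the square argument rather than copying it verbatim, is geometric: unlike axis-aligned squares, discs do not tile the plane, so the clean partition-into-grid-cells reasoning behind the square bounds is unavailable. The substitute is the packing principle that any two points more than $\ell$ apart cannot be jointly covered by a single circle of diameter $\ell$ (the analogue of Lemma~\ref{lemma:no-two-process-in-one-area}), and one must confirm that the clause-, variable-, and wire-gadget points of the reduction are placed so that each admits only the intended covering discs. Since~\cite{Fowler81} already carries out exactly this disc-specific verification, I would cite it and note only the diameter rescaling, leaving as the residual routine work the checks that coordinates remain polynomially bounded and that the closed-circle convention (a process on the boundary counts as covered) does not alter which gadget points a disc can simultaneously cover.
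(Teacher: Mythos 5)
Your proposal is correct and follows essentially the same route the paper intends: the paper gives no explicit proof, stating only that the result ``can be established similar to SQUARE-COVER,'' i.e., by invoking Fowler \emph{et al.}~\cite{Fowler81}, whose 3-SAT reduction establishes NP-completeness for covering by unit discs as well as unit boxes, exactly as you do (with the correct hardness direction: the disc problem is a special case of CIRCLE-COVER after rescaling). Your added remarks on NP membership and the boundary convention are harmless extras beyond what the paper records.
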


\noindent{\bf A Greedy Circle Cover Algorithm.}
We call this algorithm \emph{GCIRCLE}. Pick the square cover set $\cA$ produced in Section \ref{subsection:greedy}. The processes covered by any square $A\in \cA$ can be completely covered by 4 circles $C$ of diameter $\ell$: Find the midpoints of the 4 sides of the square and draw the circles $C$ of diameter $\ell$ with their centers on those midpoints. 

\onlyShort{\vspace{-2mm}}
\begin{lemma}
\label{lemma:circle-approximation}
Let $k^C_{greedy}(\cR)$ be the number of circles $C$ of diameter $\ell$ needed to cover all the processes in $\cP$ by algorithm \emph{GCIRCLE}. $k^C_{greedy}(\cR)\leq 8\cdot k^C_{opt}(\cR)$, where $k^C_{opt}(\cR)$ is the optimal number of circles $C$ in any algorithm.
\end{lemma}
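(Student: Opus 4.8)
The plan is to chain together the definition of \emph{GCIRCLE}, the square-approximation bound of Lemma~\ref{lemma:square-approximation}, and one new comparison between the optimal axis-aligned square cover and the optimal circle cover. First I would count the circles produced by \emph{GCIRCLE} directly. By construction, the algorithm takes the square cover set $\cA$ built by \emph{GSQUARE}, of size $k_{greedy}(\cR)$, and replaces each square $A\in\cA$ by exactly $4$ circles of diameter $\ell$ centered at the midpoints of the sides of $A$. As already argued in the description of \emph{GCIRCLE}, these four circles cover every process contained in $A$, so the resulting circle set covers all of $\cP$ and has size $k^C_{greedy}(\cR)=4\,k_{greedy}(\cR)$.

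Next I would invoke Lemma~\ref{lemma:square-approximation}, which gives $k_{greedy}(\cR)\leq 2\,k_{opt}(\cR)$, where $k_{opt}(\cR)$ denotes the optimal number of \emph{axis-aligned} squares of side $\ell$ covering $\cP$. Combining this with the count above immediately yields $k^C_{greedy}(\cR)\leq 8\,k_{opt}(\cR)$. It then remains only to replace the optimal square count $k_{opt}(\cR)$ by the optimal circle count $k^C_{opt}(\cR)$.

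The crux of the argument is the inequality $k_{opt}(\cR)\leq k^C_{opt}(\cR)$, and the key geometric observation is that any circle $C$ of diameter $\ell$ is contained in the concentric axis-aligned square of side $\ell$; this circumscribing square may be taken axis-aligned precisely because $C$ is rotationally symmetric. Hence, starting from any optimal circle cover of $k^C_{opt}(\cR)$ circles, I would circumscribe each circle by its concentric axis-aligned $\ell\times\ell$ square. Every process covered by a circle is then covered by the corresponding square, so these $k^C_{opt}(\cR)$ axis-aligned squares form a valid square cover of $\cP$. By optimality of $k_{opt}(\cR)$ over axis-aligned square covers, we conclude $k_{opt}(\cR)\leq k^C_{opt}(\cR)$.

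Putting the three pieces together gives $k^C_{greedy}(\cR)\leq 8\,k_{opt}(\cR)\leq 8\,k^C_{opt}(\cR)$, as claimed. I expect the only (mild) obstacle to be stating the square-to-circle comparison cleanly: one must respect that $k_{opt}(\cR)$ is defined over \emph{axis-aligned} squares, and the reduction works exactly because a disk of diameter $\ell$ admits an axis-aligned circumscribing square of side $\ell$. No fresh covering analysis is needed beyond Lemma~\ref{lemma:square-approximation}; everything else reduces to a counting identity and a containment argument.
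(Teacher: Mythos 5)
Your proof is correct, but it reaches the constant $8$ by a genuinely different route than the paper. The paper never compares the two optimal covers directly: it re-runs the slab-parity argument against circles, showing $k^C_{opt}(\cR)\geq \max\{k^S(\cR_{odd}),k^S(\cR_{even})\}$ by a direct geometric claim (a circle $C$ of diameter $\ell$ whose perimeter passes through the left side of one greedy square $A_j(R_i)$ can reach at most the right side of that square, hence cannot cover processes of any other greedy square), and then divides this into the identity $k^C_{greedy}(\cR)=4\bigl(k^S(\cR_{odd})+k^S(\cR_{even})\bigr)$ to get the ratio $8$. You instead keep the same counting identity $k^C_{greedy}(\cR)=4\,k_{greedy}(\cR)$, invoke Lemma~\ref{lemma:square-approximation} as a black box, and close the gap with the containment inequality $k_{opt}(\cR)\leq k^C_{opt}(\cR)$, obtained by circumscribing each circle of an optimal circle cover with its axis-aligned bounding square of side $\ell$. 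The two derivations are arithmetically equivalent (your chain implicitly recovers the paper's lower bound via Lemma~\ref{lemma:odd-even}), but yours is more modular and arguably more rigorous at the delicate step: the circumscription argument is an elementary containment, whereas the paper's direct claim about where a circle's perimeter ``needs to pass'' is stated rather informally. What the paper's version buys is a self-contained lower bound on $k^C_{opt}(\cR)$ expressed in terms of the greedy square counts, without routing through the optimal axis-aligned square cover; what yours buys is reuse of the square lemma and a proof that would survive unchanged if the square analysis were later improved.
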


\onlyLong{
\begin{proof}
We first show that 
$k^C_{opt}(\cR)\geq \max\{k^S(\cR_{odd}),k(\cR^S_{even})\},$
where $k^S(\cR_{odd})$ and $k^S(\cR_{even})$, respectively, are the number of squares $A=\ell\times \ell$ to cover the slabs in $\cR_{odd}$ and   $\cR_{even}$. Consider any square cover $A_j(R_i)$ of any slab $R_i$. A circle $C$ of diameter $\ell$ can cover at most the processes in $A_j(R_i)$ but not in any other square $A_l(R_i)$. This is because the perimeter of $C$ needs to pass through the left side of $A_j(R_i)$ (since there is a process positioned on that line in $A_j(R_i)$) and with diameter $\ell$, the perimeter of $C$ can touch at most the right side of $A_j(R_i)$.  

We now prove the upper bound. Since one square area $A=\ell\times \ell$ is now covered using at most 4 circles $C$ of diameter $\ell$, \emph{GCIRCLE} produces the total number of circles $k^C_{greedy}(\cR)=4\cdot (k^S(\cR_{odd})+k^S(\cR_{even})).$

Comparing $k^C_{greedy}(\cR)$ with $k^C_{opt}(\cR)$ as in  Lemma \ref{lemma:square-approximation}, we have that 
$$\frac{k^C_{greedy}(\cR)}{k^C_{opt}(\cR)}\leq \frac{4\cdot (k^S(\cR_{odd})+k^S(\cR_{even}))}{\max\{k^S(\cR_{odd}),k^S(\cR_{even})\}}\leq \frac{8\cdot \max\{k^S(\cR_{odd}),k^S(\cR_{even})\}}{\max\{k^S(\cR_{odd}),k^S(\cR_{even})\}} 
\leq 
8.~\qed$$
\end{proof}
}

\noindent{\bf Overlapping Fault Area.}
The adversary may place the fault area $F$ in any location in the plane. This means that $F$ may not necessarily be axis-aligned. Algorithms \emph{GSQUARE} and \emph{GCIRCLE} produce a cover set $\cA$ of axis-aligned squares and circles, respectively. Therefore, the algorithm we present in the next section needs to know how many areas in $\cA$ that $F$ overlaps. We now compute the bound on this number. 
The bound considers both square and circle areas $A$ under various size combinations of fault and non-fault areas.  
The lemma below is for each $A\in \cA$ and $F$ being either squares of side $\ell$ or circles of diameter $\ell$. 

\begin{lemma}
\label{lemma:overlap-A}
For the processes in $\cP$, consider the cover set $\cA$ consisting of the axis-aligned square areas $A=\ell\times \ell$. Place a relocatable square area $F=\ell\times \ell$ in any orientation (not necessarily axis-aligned). $F$ overlaps no more than 7 squares $A$.  If the cover set consists of circles $C\in \cA$  of diameter $\ell$ and $F$ is a circle of diameter $\ell$, then $F$ overlaps no more than $28$ circles $C$.   
\end{lemma}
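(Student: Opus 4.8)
The plan is to exploit the slab/column structure of the cover produced by \emph{GSQUARE}: the cover squares $A$ partition into horizontal slabs of height $\ell$ separated by positive vertical gaps (Lemma~\ref{lemma:slab-overlap}), and within each slab the squares form a left-to-right sequence separated by positive horizontal gaps (Lemma~\ref{lemma:square-overlap}). A relocatable square $F$ of side $\ell$, rotated by any angle $\theta$, has an axis-aligned bounding box of side $\ell(|\cos\theta|+|\sin\theta|)\le \ell\sqrt2<2\ell$. First I would bound the vertical spread: to meet two slabs whose indices differ by three, $F$ would need vertical extent exceeding the distance from the top of the lower slab to the bottom of the upper one, which is more than $2\ell$ by the slab gaps; since $F$'s vertical extent is at most $\ell\sqrt2<2\ell$, $F$ meets at most three slabs. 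The identical argument in the horizontal direction, using the column gaps, shows $F$ meets at most three cover squares inside any single slab.

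These two facts alone give only $3\times 3 = 9$, so the crux is a refinement showing the two extreme slabs each contribute at most two squares, for a total of $2+3+2=7$. The key observation is that $F$ can meet three slabs only when it is genuinely tilted ($\theta\notin\{0,\pi/2\}$), since meeting three slabs forces its vertical extent to exceed $\ell$. When this happens $F$ pokes only slightly beyond the middle slab, so at the heights where $F$ enters the top and bottom slabs its horizontal cross-section --- the length of the horizontal chord of the tilted square --- is strictly less than $\ell$. A chord of width less than $\ell$ meets at most two gap-separated columns (meeting three would require a horizontal span exceeding $\ell$, by the column-gap computation). Hence the middle slab contributes at most three and each extreme slab at most two, giving $7$.

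I expect the chord-width estimate to be the main obstacle for the square case: it must be verified uniformly over all orientations $\theta$, must remain valid when the slab and column gaps are arbitrarily small, and one must also check that configurations meeting only one or two slabs never exceed $7$ (they yield at most $3$ and $6$, respectively, so they are harmless).

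For the circular case, the plan is to reduce to the square bound through \emph{GCIRCLE}, which assigns exactly four cover circles $C$ to the four side-midpoints of each cover square $A$; since $28=4\cdot 7$, the intended argument is that $F$ can reach the circles of at most seven cover squares, whence at most $4\cdot 7=28$ circles. I expect the genuine difficulty to lie precisely in justifying the factor seven: because each cover circle protrudes by $\ell/2$ beyond its parent square, a fault circle $F$ of diameter $\ell$ may overlap a circle $C$ of a square $A$ that the bounding square of $F$ does not itself meet, so the naive ``inscribe $F$ in an $\ell\times\ell$ square and apply the square bound'' is not by itself valid. A sound route is to observe that $F$ overlaps $C$ only when their centers lie within $\ell$, i.e. the center of $C$ (a side-midpoint of $A$) lies in the radius-$\ell$ disk about the center of $F$, and then to bound, via the slab/column structure together with the extreme-slab refinement above, how many such side-midpoints that disk can contain. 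Carrying this count out carefully --- in particular reconciling it with near-coincident midpoints arising from squares separated by tiny gaps, so that the total does not exceed $28$ --- is the principal obstacle for the circular part.
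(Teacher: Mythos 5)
Your square-case plan is essentially the paper's own route: decompose the cover into horizontal slabs, show $F$ meets at most three slabs and at most three squares per slab, then refine the two extreme slabs to two squares each, giving $2+3+2=7$. The difference is that the paper simply asserts the refinement (``not axis-aligned $F$ can only overlap $2$ squares in the top row and $2$ in the bottom row'') and points to a figure, whereas you identify the chord-width estimate as the fact that needs proving. That estimate does hold uniformly in the tilt angle, so the obstacle you flagged resolves: if $F$ is tilted by $\theta\in(0,\pi/2)$ and meets three slabs, write $s=\sin\theta+\cos\theta$; the vertical extent of $F$ is $\ell s$, so $F$ protrudes above the middle slab by at most $a\le\ell(s-1)$. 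Near the top vertex the horizontal chord at depth $d$ has length $d(\tan\theta+\cot\theta)=d/(\sin\theta\cos\theta)$, and since $\sin\theta\cos\theta=(s^2-1)/2$, the chord where $F$ exits the middle slab has length at most $2\ell(s-1)/(s^2-1)=2\ell/(s+1)<\ell$, because $s>1$ for a genuine tilt. Moreover $a\le\ell(s-1)\le\ell\min(\sin\theta,\cos\theta)$, so the part of $F$ above that line is the apex triangle, whose horizontal projection equals that chord; a set of horizontal extent less than $\ell$ cannot meet two cover squares whose indices differ by two (they are separated by a full square plus two gaps, hence by more than $\ell$), so each extreme slab contributes at most $2$, and the one- and two-slab configurations give at most $3$ and $6$ as you note.

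The circle case is where you genuinely depart from the paper, and your criticism of the ``naive'' reduction is in fact a criticism of the paper's own proof: the paper inscribes the fault circle in an $\ell\times\ell$ square, applies the $7$-square bound to that square, and multiplies by $4$ --- exactly the step you identify as invalid, since a cover circle protrudes $\ell/2$ beyond its parent square and can be overlapped by $F$ even when $F$ never touches the parent square. Your center-distance reformulation repairs this and can be closed cleanly: $F$ overlaps $C$ if and only if their centers are within distance $\ell$, and the center of $C$ lies on the boundary of its parent square, so the parent square of every overlapped circle meets the closed disk $D$ of radius $\ell$ about the center of $F$. Since $D$ has horizontal and vertical extent exactly $2\ell$, the same gap arguments as above show $D$ meets at most $3$ slabs and at most $3$ squares per slab, hence at most $9$ parent squares. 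Finally, at most one of these squares can contribute all $4$ of its circles: any two disjoint cover squares have a pair of opposite side midpoints more than $2\ell$ apart (left/right midpoints for squares in the same slab, bottom/top midpoints for squares in different slabs), and the center of $F$ cannot be within $\ell$ of both. Hence the total is at most $4+8\cdot 3=28$. Counting circles by parent square in this way also disposes of your worry about near-coincident midpoints. So your proposal is sound on both halves; on the circular half it is not merely a different route but a repair of a real gap in the paper's justification --- note that a literal correction of the paper's reduction (parent square meets $D$, times $4$) would only give $36$, so the finer midpoint count is genuinely needed to reach $28$.
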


\onlyLong{

\begin{proof}
Suppose $F$ is axis-aligned.  $F$ may overlap at most two squares $A$ horizontally. Indeed, the total width covered by two squares in $\cA$ is $>2\ell$ since the squares do not overlap. Meanwhile, the total width of $F$ is $\ell$. 
Similarly, $F$ may overlap at most two squares vertically. Combining possible horizontal and vertical overlaps, we obtain that $F$ may overlap at most 4 distinct axis-aligned areas $A$. See Figure~\ref{fig:overlap-aaligned} for illustration.

\begin{figure}[!t]
\vspace{-0mm}
\begin{center}
\includegraphics[height=1.0in]{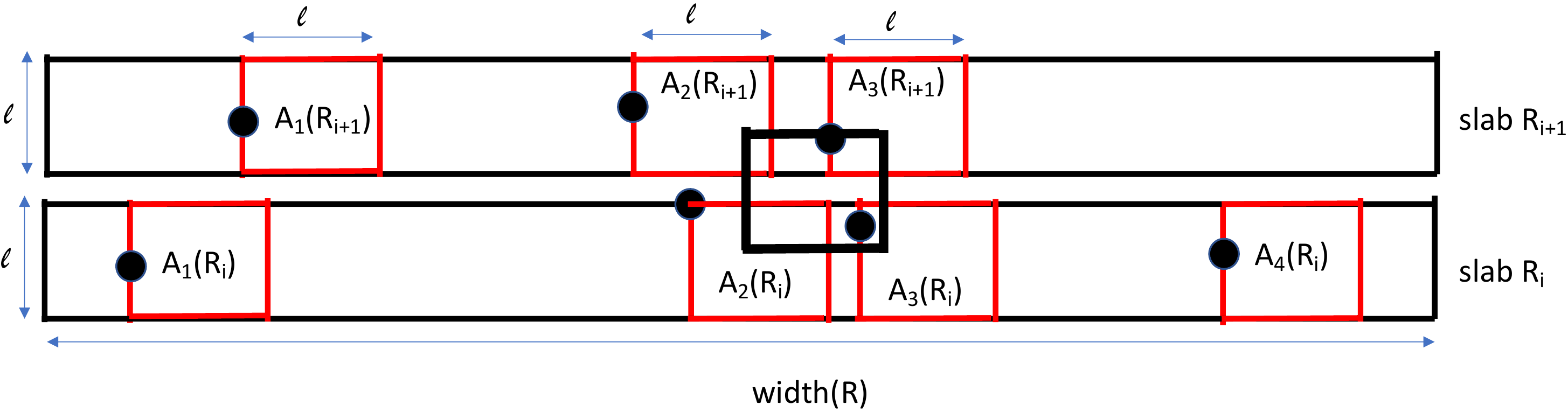}
\end{center}
\vspace{-7mm}
\caption{The maximum overlap of an axis-aligned fault area $F$ with the identical axis-aligned cover squares $A$ of same size.}
\label{fig:overlap-aaligned}
\vspace{-0mm}
\end{figure}

\begin{figure}[!t]
\vspace{-2mm}
\begin{center}
\includegraphics[height=1.15in]{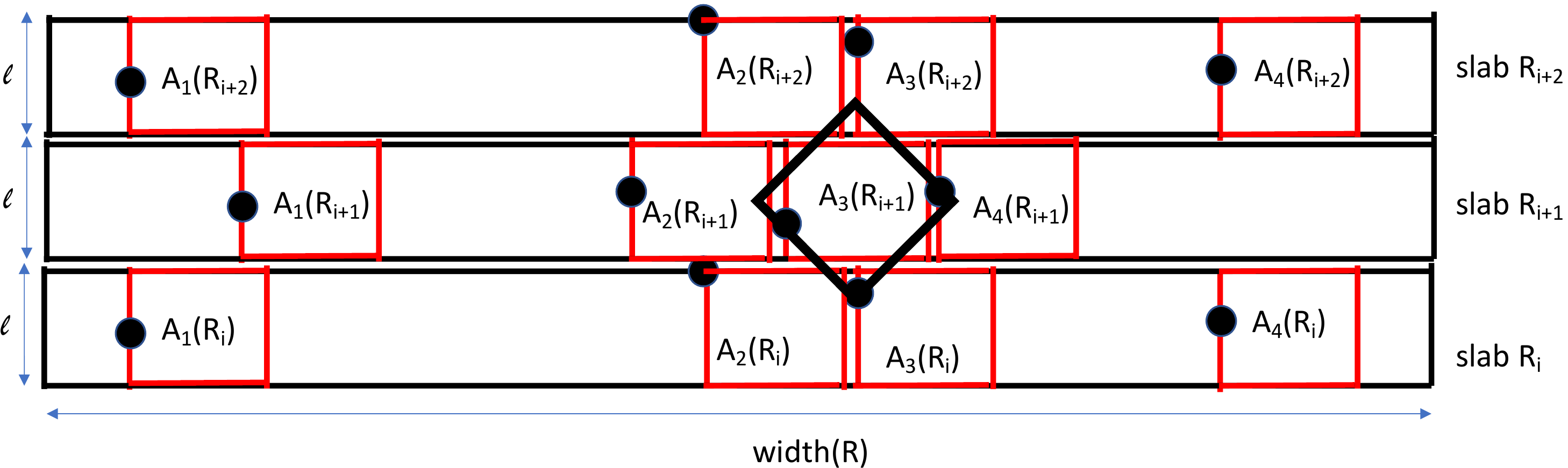}
\end{center}
\vspace{-7mm}
\caption{The maximum overlap of a non-axis-aligned fault area $F$ with the identical axis-aligned cover squares $A$ of the same size.}
\label{fig:overlap-nonaligned}
\vspace{-5mm}
\end{figure}

Consider now that $F$ is not axis-aligned. 
$F$ can span at most $\sqrt{2}\ell$ horizontally and $\sqrt{2}\ell$ vertically. Therefore, horizontally, $F$ can overlap at most three areas $A$.  Vertically, $F$ can overlap three areas as well. However, not all three areas on the top and bottom rows can be overlapped at once. Specifically, not axis-aligned $F$ can only overlap  2 squares in the top row and 2 in the bottom row. Therefore, in total, $F$ may overlap at most 7 distinct axis-aligned areas. Figure~\ref{fig:overlap-nonaligned} provides an illustration. 

For the case of circular $F$, one square area $A$ can be completely covered by 4 circles $C$. Furthermore, square $F$ of size $\ell$ overlaps at most 7 square areas $A$ of side $\ell$. Moreover, the circular $F$ of diameter $\ell$ can be inscribed in a square of side $\ell$. Therefore, a circular $F$ cannot overlap more than 7 squares, and hence  the circular $F$ may overlap in total at most $7\times 4=28$ circles $C$.    
\qed  
\end{proof}
}

The first lemma below is for each $A$ being an axis-aligned square of side $\ell$ or a circle of diameter $\ell$ while $F$ being either a square of side $\ell/\sqrt{2}$ or a circle of diameter $\ell/\sqrt{2}$.  The second lemma below considers circular fault area $F$ of diameter $\sqrt{2}\ell$.

\begin{lemma}
\label{lemma:overlap-Aroot2}
For the processes in $\cP$, consider the cover set $\cA$ consisting of the axis-aligned squares  $A=\ell\times \ell$. Place a relocatable square area $F=\ell/\sqrt{2}\times \ell/\sqrt{2}$ in any orientation (not necessarily axis-aligned). $F$ overlaps no more than  4 squares  $A$.  If the cover set consists of circles $C\in \cA$ of diameter $\ell$ each, and $F$ is a circle of diameter $\ell/\sqrt{2}$, then $F$ overlaps no more than 16 circles $C$.
\end{lemma}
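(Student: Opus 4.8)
The plan is to reduce both claims to the counting argument already developed for Lemma~\ref{lemma:overlap-A}, exploiting the fact that a square of side $\ell/\sqrt{2}$ is geometrically no larger than a cover square $A$. The key observation I would isolate first is that a square $F$ of side $\ell/\sqrt{2}$ has diagonal exactly $(\ell/\sqrt{2})\cdot\sqrt{2}=\ell$, so in \emph{any} orientation the axis-aligned bounding box of $F$ has width and height at most $\ell$ (the extremal case being the $45^\circ$ rotation, where the bounding box is exactly $\ell\times\ell$). Since $F$ overlaps a cover area $A$ only if its bounding box does, it suffices to bound the number of cover squares met by an axis-aligned box of width and height at most $\ell$.

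For the square part I would re-run the slab-based argument of \emph{GSQUARE}. By Lemma~\ref{lemma:slab-overlap} the slabs in $\cR$ are disjoint horizontal strips of height $\ell$ separated by nonempty gaps, so an axis-aligned region of height at most $\ell$ meets at most two of them: touching both a slab and the slab two positions away would require a vertical span exceeding $\ell$. Within a single slab, Lemma~\ref{lemma:square-overlap} gives that the cover squares $A_j(R_i)$ are disjoint and separated by horizontal gaps, so by the identical one-dimensional argument a region of width at most $\ell$ meets at most two of them. Multiplying, the bounding box of $F$ meets at most $2\times 2 = 4$ cover squares, establishing the first claim.

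For the circle part I would inscribe the circular fault area $F$ of diameter $\ell/\sqrt{2}$ in a square $F'$ of side $\ell/\sqrt{2}$, so that $F\subseteq F'$. By the square part just proved, $F'$ overlaps at most $4$ of the cover squares produced by \emph{GSQUARE}. In \emph{GCIRCLE} each such square is covered by exactly $4$ circles $C$ of diameter $\ell$, so following the same accounting as in Lemma~\ref{lemma:overlap-A} the circle $F$ overlaps at most $4\times 4=16$ circles $C$.

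The main obstacle is the non-axis-aligned case in the first part: the $\ell\times\ell$ fault area of Lemma~\ref{lemma:overlap-A} could span up to $\sqrt{2}\ell$ and so reach a third row or column, whereas here the diagonal-equals-$\ell$ bound is precisely what pins the bounding box at $\ell$ in every direction and keeps the per-axis count at two. I would therefore state the bounding-box inequality as a direct consequence of the side length $\ell/\sqrt{2}$, and check that the extremal equality $w=\ell$ still forbids a third overlap, which it does: three consecutive disjoint, gap-separated cover squares have their first and third starting positions more than $2\ell$ apart, so a window of width at most $\ell$ cannot intersect both. For the circle accounting I would stay at the level of Lemma~\ref{lemma:overlap-A}, charging four circles to each generating square.
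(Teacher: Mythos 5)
Your proof is correct and follows essentially the same route as the paper's: bound the horizontal and vertical extent of the rotated square $F$ by $\sqrt{2}\cdot\ell/\sqrt{2}=\ell$, conclude at most two cover squares per axis (hence $4$ total), then inscribe the circular $F$ in a side-$\ell/\sqrt{2}$ square and charge $4$ circles per overlapped square to get $16$. The only difference is that you spell out, via the gap-separation of slabs and of squares within a slab, why an $\ell$-wide window meets at most two covers per axis --- a step the paper leaves implicit (it appears in its proof of Lemma~\ref{lemma:overlap-A}).
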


\onlyLong{
\begin{proof}
$F$ can extend, horizontally and vertically, at most $\sqrt{2}\cdot \ell/\sqrt{2}=\ell.$ Therefore, $F$ can overlap no more than two squares $A$ horizontally and two squares $A$ vertically. 

For the case of circular $F$ of diameter $\ell/\sqrt{2}$, it can be inscribed in a square of side $\ell/\sqrt{2}$. 
This square can overlap no more than $4$ squares of $\ell\times\ell$. Each such square can be covered by at most $4$ circles of diameter $\ell$. Therefore, the total number of circles to overlap the circular fault area $F$ is $4\times4 = 16$.
\qed
\end{proof}
}


\begin{lemma}
\label{lemma:overlap-circle}

For the processes in $\cP$, consider the cover set $\cA$ consisting of the axis-aligned squares areas $A=\ell\times \ell$. Place a relocatable circular fault area $F$ of diameter $\sqrt{2}\ell$. $F$ overlaps no more than 8 squares $A$. If $\cA$ consists of circles $C$ of diameter $\ell$, then circular $F$ of diameter  $\sqrt{2}\ell$ overlaps no more than 32 circles $C$. 
\end{lemma}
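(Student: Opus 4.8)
The plan is to handle the two claims separately: first bound the number of axis-aligned cover squares that the circular $F$ can overlap, and then convert to circles by the same reduction used in the proofs of Lemmas~\ref{lemma:overlap-A} and~\ref{lemma:overlap-Aroot2}.

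For the square cover I would exploit the slab structure produced by \emph{GSQUARE}. By Lemma~\ref{lemma:slab-overlap} the cover squares lie in pairwise non-overlapping horizontal slabs of height $\ell$, and by Lemma~\ref{lemma:square-overlap} the squares inside a single slab are pairwise non-overlapping of width $\ell$. The circle $F$ of diameter $\sqrt{2}\ell$ has vertical and horizontal extent exactly $\sqrt{2}\ell < 2\ell$. First I would bound the vertical reach: to meet a fourth slab the circle would have to span two full intermediate slabs of height $\ell$ each, i.e. at least $2\ell$, exceeding $\sqrt{2}\ell$; hence $F$ meets at most three (consecutive) slabs. Within a single slab the horizontal footprint of $F$ is the projection of a convex set onto the $x$-axis, hence an interval, of length at most $\sqrt{2}\ell < 2\ell$, so it can meet at most three of the disjoint width-$\ell$ squares. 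This gives the crude bound $3\times 3 = 9$.

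The key step, and the main obstacle, is to shave this down to $8$ by showing the three overlapped slabs cannot all host three squares. I would argue through chord widths: letting $y_c$ be the height of the horizontal diameter of $F$ and $r=\ell/\sqrt{2}$ its radius, the footprint of $F$ inside a slab has half-width $\sqrt{r^2-\delta^2}$, where $\delta$ is the vertical distance from $y_c$ to the nearest point of that slab. A slab can meet three squares only if this footprint exceeds $\ell$, i.e. only if $\delta < \ell/2$ (using $r^2-\ell^2/4 = \ell^2/4$); otherwise the footprint has length at most $\ell$ and meets at most two squares. Since the slabs are disjoint bands of height $\ell$, the open interval $(y_c-\ell/2, y_c+\ell/2)$ of length $\ell$ can meet at most two of them, so at most two of the three overlapped slabs satisfy $\delta < \ell/2$. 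Hence at most two slabs host three squares and the third hosts at most two, for a total of at most $3+3+2 = 8$, which is attained and therefore tight.

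For the circle cover I would invoke the reduction already established: \emph{GCIRCLE} replaces each axis-aligned cover square of side $\ell$ by $4$ circles of diameter $\ell$. Since $F$ overlaps at most $8$ cover squares, it overlaps at most $8\times 4 = 32$ of the resulting circles, giving the second claim. The one delicate point is the vertical counting in the refinement: the argument must distinguish whether $y_c$ lies inside an overlapped slab or in a gap between slabs, so that the criterion $\delta < \ell/2$ is applied to the correct nearest slab edge in each configuration; the case analysis is short but should be spelled out to confirm that in every configuration at most two of the overlapped slabs are within vertical distance $\ell/2$ of $y_c$.
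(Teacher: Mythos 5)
Your proposal is correct and takes essentially the same approach as the paper: bound the overlap by at most three slabs and at most three squares per slab, refine the count to $3+3+2=8$, and pass to circles via the four-circles-per-square cover to get $8\times 4=32$. The only difference is one of rigor: where the paper supports the key refinement (that not all three overlapped slabs can contain three overlapped squares each) by a figure, you derive it analytically from the chord-width criterion $\delta<\ell/2$ together with the fact that an open interval of length $\ell$ can meet at most two disjoint slabs of height $\ell$.
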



\onlyLong{
\begin{figure}[!t]
\vspace{-0mm}
\begin{center}
\includegraphics[height=1.12in]{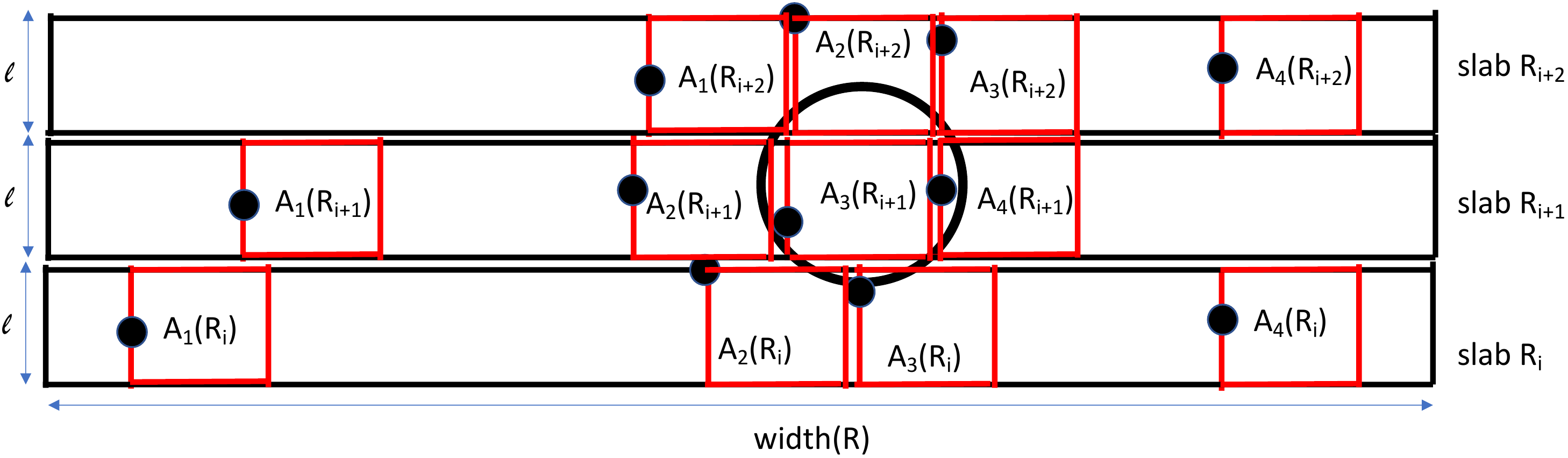}
\end{center}
\vspace{-6mm}
\caption{The maximum overlap of a circular fault area $F$ of diameter $\sqrt{2}\ell$ with axis-aligned cover squares $A$ of side $\ell$.}
\label{fig:overlap-circle}
\vspace{-2mm}
\end{figure}

\begin{proof}
Since $F$ is a circle of diameter $\sqrt{2}\ell$, $F$ can span horizontally and vertically at most $\sqrt{2}\cdot \ell$. 
Arguing similarly as in Lemma \ref{lemma:overlap-A}, $F$ can overlap either at most 3 squares $A$ in top row or 3 on the bottom row. Interestingly, if $F$ overlaps 3 squares in the top row, it can only overlap at most 2 in the bottom row and vice-versa. Therefore, in total, $F$ overlaps at most 8 distinct squares of side $\ell$. Figure~\ref{fig:overlap-circle} provides an illustration. 

Since one square of side $\ell$ can be completely covered using 4 circles of diameter $\ell$, $F$ of diameter $\sqrt{2}\ell$ can cover at most $8\times 4=32$ circles $C$ of diameter $\ell$. \qed
\end{proof}
}

\onlyLong{
\begin{algorithm}[!t]
{\small
{\bf Setting:} A set $\cP$ of $N$ processes positioned at distinct planar coordinates. Each process can communicate with all other processes and knows the coordinates of all other processes. The processes covered by the fault area $F$ at unknown location are Byzantine. There are $M\geq 1$ of identical fault areas $F$ and processes know $M$.\\
{\bf Input:} Each process has initial value either 0 or 1.\\
{\bf Output:} Each correct process outputs decision subject to geoconsensus\\
{\bf \em Procedure for process $p_k$}\\
// leaders selection \\
Compute the set $\cA$ of covers $A_j(R_i)$;\\
// {\bf For each} cover $A_j(R_i)\in \cA$ {\bf do}\\
 
$\cP_{min}\leftarrow$ a set of processes with minimum $y$-coordinate among covered by $A_j(R_i)$;\\
\If{$|\cP_{min}|=1$}
{
$l_j(A_j(R_i))\leftarrow$ the only process in $\cP_{min}$;\\
}
\Else
{
$l_j(A_j(R_i))\leftarrow$ the process in  $\cP_{min}$ with minimum $x$-coordinate;\\
}

Let $P_L$ be the set of leaders, one for each $A_j(R_i)\in \cA$; \\
// consensus \\
\eIf{$p_k \in \cP_L$}
{run \emph{PSL} algorithm, achieve decision $v$, broadcast $v$, output $v$}
{
wait for messages with identical decision $v$ from at least $2M + 1$ processes from $\cP_L$, output $v$ 
}
\caption{\emph{GENERIC} geoconsensus algorithm.}
\label{algorithm:consensus}
}
\end{algorithm}	
}

\vspace{-2mm}
\section{The GENERIC Geoconsensus Algorithm}
\vspace{-2mm}
\label{section:generic}
We now describe an algorithm solving 
geoconsensus we call {\em GENERIC} for a set $\cP$ of $N$ processes on the plane. 
Each process $p_k$
knows the coordinates of all other  processes and can communicate with all of them.
Each process $p_k$ knows the shape (circle, square, etc.) and size (diameter, side, etc.) of the fault area $F$. There are $M\geq 1$ fault areas, i.e., $|\mathcal{F}|=M$ and $p_k$ knows $M$. The processes do not know the orientation and location of each fault area $F$. Fault area $F$ is controlled by an adversary and all processes covered by that area $F$ are Byzantine.  Each process $p_k$ is given an initial value either 0 or 1. The output of each process has to comply with the three properties of geoconsensus. 

The pseudocode is given in Algorithm \ref{algorithm:consensus}\onlyShort{~(in Appendix)}.
{\em GENERIC} operates as follows.
Each process $p_k$ computes a set $\cA$ of covers $A_j(R_i)$ that are of same size as $F$. Then $p_k$ determines the leader $l_j(A_j(R_i))$ in each cover $A_j(R_i)$. The process in $A_j(R_i)$ with smallest $y$-coordinate is selected as a leader. If there exist two processes with the same smallest $y$-coordinate, then the process with the smaller $x$-coordinate between them is picked. 
If $p_k$ is selected leader, it participates in running \emph{PSL}  \cite{Pease80}. The leaders run \emph{PSL} then broadcast the achieved decision. The non-leader processes adopt it. 

\vspace{2mm}
\noindent{\bf Analysis of \emph{GENERIC}.}
We now study the correctness and fault-tolerance guarantees of \emph{GENERIC}. In all theorems of this section, \emph{GENERIC} achieves the solution in $M+2$ communication rounds. The proof for this claim is similar to that for \emph{BASIC} in Theorem~\ref{theorem:basic}.

Let the fault area $F=\ell\times \ell$ be a, not necessarily axis-aligned, square. 

\begin{theorem}
\label{theorem:single-fault-main}
Given a set $\cP$ of $N$ processes and one square are $F$ are positioned at an unknown location such that any process of $\cP$ covered by $F$ is Byzantine. Algorithm GENERIC solves geoconsensus 
with the following guarantees:
\begin{itemize}
    \item If $F=\ell\times \ell$ and not axis-aligned and  $A=\ell\times \ell$, $f\leq N-15$ faulty processes can be tolerated given that $|\cA|\geq 22$. 
    \item If $F=\ell\times \ell$ and  axis-aligned and $A=\ell\times \ell$, $f\leq N-9$ faulty processes can be tolerated given that  $|\cA|\geq 13$.
    \item If $F=\ell/\sqrt{2}\times \ell/\sqrt{2}$ but $A=\ell\times \ell$, then even if $F$ is not axis aligned, $f\leq N-9$ faulty processes can be tolerated given that $|\cA|\geq 13$.
\end{itemize} 
\end{theorem}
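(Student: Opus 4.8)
The plan is to reduce all three cases to a single counting argument about how many leaders the one fault area can corrupt, and then invoke the correctness of \emph{PSL}. Since $M=1$, a process is Byzantine exactly when it is covered by $F$. First I would record that every process runs the square cover of Section~\ref{subsection:greedy} on the common, publicly known coordinates, so all processes deterministically agree on the same cover set $\cA$ and hence on the same leader set $\cP_L$, one leader $l_j(A_j(R_i))$ per cover area. Because the cover squares produced by \emph{GSQUARE} are pairwise disjoint (Lemmas~\ref{lemma:slab-overlap} and~\ref{lemma:square-overlap}), these leaders are distinct and $|\cP_L|=|\cA|$.

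The crux is bounding the number of Byzantine leaders. A leader $l_j(A_j(R_i))$ can be faulty only if its coordinates lie inside $F$; since that leader sits in its own cover square $A_j(R_i)$, this forces $F$ to overlap $A_j(R_i)$. As the cover squares are disjoint and each carries exactly one leader, the number of Byzantine leaders is at most the number of cover squares that $F$ overlaps. I would then plug in the overlap bounds already established: for a not-necessarily-axis-aligned $F=\ell\times\ell$ against $A=\ell\times\ell$, Lemma~\ref{lemma:overlap-A} gives at most $7$ overlapped squares, hence at most $7$ Byzantine leaders; for axis-aligned $F=\ell\times\ell$ the same lemma gives at most $4$; and for $F=\ell/\sqrt{2}\times\ell/\sqrt{2}$ against $A=\ell\times\ell$, Lemma~\ref{lemma:overlap-Aroot2} gives at most $4$, even when $F$ is tilted.

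With the count of faulty leaders $t$ in hand, correctness follows from \emph{PSL}: Byzantine agreement on $L=|\cA|$ participants tolerates $t$ faults whenever $L\geq 3t+1$. The three hypotheses are calibrated so that $L=3t+1$ exactly, namely $22=3\cdot 7+1$ and $13=3\cdot 4+1$ (twice). Thus the correct leaders reach a common decision $v$: agreement is immediate, and validity holds because if every correct process, hence every correct leader, starts with the same input, \emph{PSL}'s validity returns that input. The leaders broadcast $v$; since at least $L-t\geq 2t+1$ correct leaders send $v$ while at most $t$ Byzantine leaders can advertise arbitrary values, each non-leader adopts the value carried by more than $t$ of the leaders, which is uniquely $v$. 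This yields agreement, validity, and termination for all correct processes.

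Finally, the fault-tolerance bounds fall out of the same count: no matter where the adversary places $F$, at least $L-t$ leaders remain correct, namely $22-7=15$, $13-4=9$, and $13-4=9$ respectively, so at least that many processes are correct and the algorithm succeeds whenever $f\le N-15$ (resp.\ $N-9$, $N-9$); the round complexity is inherited from \emph{PSL} as in Theorem~\ref{theorem:basic}. The step I expect to be the real obstacle is this counting bound: one must argue cleanly that ``leader covered by $F$'' implies ``cover square overlapped by $F$'' and that distinct overlapped squares yield distinct leaders, so that the purely geometric overlap lemmas transfer without loss into a bound on corrupted \emph{PSL} participants. Once that bridge is in place, the rest is bookkeeping against the $3t+1$ thresholds.
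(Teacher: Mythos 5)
Your proposal is correct and follows essentially the same route as the paper's proof: bound the number of Byzantine leaders by the overlap numbers $n(F)=7$ (Lemma~\ref{lemma:overlap-A}) and $n(F)=4$ (Lemmas~\ref{lemma:overlap-A} and~\ref{lemma:overlap-Aroot2}), check the \emph{PSL} threshold $|\cA|\geq 3\,n(F)+1$ (i.e.\ $22$ and $13$), and read off the tolerance $f\leq N-15$ (resp.\ $N-9$). If anything, you are more careful than the paper on the bridging step (faulty leader implies overlapped square, and disjoint squares give distinct leaders) and on how non-leaders adopt the decision, both of which the paper leaves implicit.
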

\begin{proof}
We start by proving the first case. From Lemma \ref{lemma:overlap-A}, we obtain that a square fault 
area $F=\ell\times \ell$, regardless of orientation and location, can overlap at most $n(F)=7$ axis-aligned squares $A=\ell\times \ell$. 
When $|\cA|\geq 22$, we have at least $\cA-n(F)\geq 15$ axis-aligned squares containing only correct processes.
Since \emph{GENERIC} reaches consensus using only the values of the leader processes in each area $A$, if we have $|\cA|\geq 22$ areas, it is guaranteed that $\geq 2\cdot |\cA|/3+1\geq 2\cdot n(F)+1$ leader processes are correct (with $n(F)=7$) and they can reach consensus using \emph{PSL} algorithm.   
Regarding the number of faulty process that can be tolerated, the fault area $F$ can cover $f\leq N-15$ processes but still algorithm \emph{GSQUARE} produces total $|\cA|=22$ areas. All these $f\leq N-15$ faulty processes can be tolerated.   

Let us address the second case. An axis-aligned square $F$ can overlap at most $n(F)=4$ axis-aligned squares $A$. Therefore, when $|\cA|\geq 13$, we have that $|\cA|-9\geq 2\cdot n(F)+1$ leader processes are correct and they can reach consensus. In this case, $f\leq N-9$ processes can be covered by $F$ and still they all can be tolerated. 

Let us now address the third case, when $F=\ell/\sqrt{2}\times \ell/\sqrt{2}$ but $A=\ell\times \ell$. Regardless of its orientation, $F$ can overlap at most $n(F)=4$ squares $A$. Therefore, $|\cA|\geq 13$ is sufficient for consensus and total $f\leq N-9$ processes can be tolerated.  
\qed 
\end{proof}

For the multiple fault areas $F$ with $|\mathcal{F}|=M$, Theorem \ref{theorem:single-fault-main} extends as follows. 

\begin{theorem}
\label{theorem:multiple-fault-main}
Given a set $\cP$ of $N$ processes and a set of  $M\geq 1$ of square areas $F$ positioned at unknown locations such that any process of $\cP$ covered by any $F$ may be Byzantine.
Algorithm GENERIC solves geoconsensus with the following guarantees:
\begin{itemize}
    \item If each $F=\ell\times \ell$ and not axis-aligned and $A=\ell\times \ell$, $f\leq N-15M$ faulty processes can be tolerated given that $|\cA|\geq 22M$. 
    \item If each $F=\ell\times \ell$ and  axis-aligned and $A=\ell\times \ell$, $f\leq N-9M$ faulty processes can be tolerated given that $|\cA|\geq 13 M$. 
    \item If each $F=\ell/\sqrt{2}\times \ell/\sqrt{2}$ but $A=\ell\times \ell$, then even if $F$ is not axis-aligned, $f\leq N-9M$ faulty processes can be tolerated given that $|\cA|\geq 13M$.
\end{itemize} 
\end{theorem}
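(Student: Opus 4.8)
The plan is to lift the single-area analysis of Theorem~\ref{theorem:single-fault-main} to $M$ fault areas by a straightforward superposition (union) argument over the overlapped cover squares, leaving the algorithm \emph{GENERIC} and its leader set unchanged. Recall that the three cases differ only in the per-area overlap number $n(F)$: by Lemma~\ref{lemma:overlap-A} a not-necessarily-axis-aligned square $F=\ell\times\ell$ overlaps $n(F)=7$ cover squares $A=\ell\times\ell$, an axis-aligned such $F$ overlaps $n(F)=4$, and by Lemma~\ref{lemma:overlap-Aroot2} a square $F=\ell/\sqrt2\times\ell/\sqrt2$ overlaps $n(F)=4$ regardless of orientation. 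I would treat all three cases uniformly in terms of $n(F)$ and substitute the concrete values only at the end.

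The first step counts the faulty leaders. Each of the $M$ fault areas overlaps at most $n(F)$ cover squares by the cited lemmas, so taking a union over the $M$ areas, at most $M\cdot n(F)$ cover squares contain any faulty process. Since \emph{GENERIC} designates exactly one leader per cover square, at most $M\cdot n(F)$ of the $|\cA|$ leaders are faulty. The complementary step bounds the correct processes: at least $|\cA|-M\cdot n(F)$ cover squares contain only correct processes, and each such square contributes a correct leader (every cover square holds at least one process by the construction of \emph{GSQUARE}). Hence at least $|\cA|-M\cdot n(F)$ processes are guaranteed correct irrespective of the adversary's placement, which is exactly what bounds the fault tolerance: with $n(F)=7$ and $|\cA|\ge 22M$ this yields $\ge 15M$ guaranteed-correct leaders and therefore $f\le N-15M$, while with $n(F)=4$ and $|\cA|\ge 13M$ it yields $\ge 9M$ and therefore $f\le N-9M$.

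The second step invokes \emph{PSL} on the full leader set. \emph{PSL}~\cite{Pease80} tolerates up to $t$ faults among $n$ participants whenever $n\ge 3t+1$; here $n=|\cA|$ and the number of faulty leaders is $t\le M\cdot n(F)$. It therefore suffices to check $|\cA|\ge 3M\cdot n(F)+1$. For $n(F)=7$ this reads $|\cA|\ge 21M+1$, which is implied by $|\cA|\ge 22M$ for every $M\ge1$; for $n(F)=4$ it reads $|\cA|\ge 12M+1$, implied by $|\cA|\ge 13M$ for every $M\ge1$. The correct leaders thus agree through \emph{PSL} and broadcast the decision, which the correct non-leaders adopt, exactly as in the single-area argument of Theorem~\ref{theorem:single-fault-main}; the round count remains $M+2$.

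The only point requiring care — and where I would expect a skeptical reader to push back — is the claim that the union bound $M\cdot n(F)$ is genuinely the worst case. I would argue that distinct fault areas may overlap a common cover square, or may overlap fewer than $n(F)$ squares, but such coincidences only \emph{decrease} the count of faulty cover squares, so $M\cdot n(F)$ is an upper bound that the adversary attains only by placing the $M$ areas so that their overlapped square sets are pairwise disjoint. The remaining subtlety is purely arithmetic: the clean thresholds $22M$ and $13M$ carry a slack of $M-1$ over the tight requirement $3M\cdot n(F)+1$, so the $+1$ in the \emph{PSL} bound is absorbed precisely because $M\ge1$, which is exactly why the theorem is stated under the hypothesis $M\ge1$.
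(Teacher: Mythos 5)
Your proposal is correct, and at heart it does what the paper does---lift Theorem~\ref{theorem:single-fault-main} to $M$ fault areas by scaling the per-area quantities by $M$---but your execution is genuinely different from, and tighter than, the paper's. The paper's proof is a two-line scaling claim: it asserts that for $M$ fault areas ``$M$ separate $|\cA|$ sets are needed, with each set tolerating a single fault area,'' and concludes that both the cover requirement $\delta$ and the tolerance deficit $\gamma$ of Theorem~\ref{theorem:single-fault-main} multiply by $M$. Taken literally, that partitioning picture is awkward: \emph{GENERIC} runs \emph{PSL} once on the whole leader set $\cP_L$, not $M$ times on $M$ disjoint groups (separate runs could not by themselves guarantee agreement across groups), and the paper never checks that $M\cdot\delta$ covers satisfy the \emph{PSL} threshold for the aggregated fault count. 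Your argument supplies exactly this missing bookkeeping: you keep the single global leader set, bound the total number of faulty leaders by the union bound $M\cdot n(F)$ over the fault areas (correctly noting that coincidences between the areas' footprints only decrease the count), lower-bound the guaranteed-correct processes by the $|\cA|-M\cdot n(F)$ untouched cover squares, and then verify the global condition $|\cA|\geq 3M\cdot n(F)+1$, observing that the stated thresholds $22M$ and $13M$ exceed the tight requirements $21M+1$ and $12M+1$ by exactly $M-1$, which is precisely where the hypothesis $M\geq 1$ enters. So your route buys rigor and fidelity to what the algorithm actually executes, at the cost of a somewhat longer argument; the paper's route is shorter but leaves the reader to reconstruct your union bound and threshold check. (Both treatments, yours and the paper's, inherit the same glossed-over detail from the single-area case, namely that non-leaders adopt a value seen $2M+1$ times, a threshold calibrated to $M$ rather than to $M\cdot n(F)$ faulty leaders; since the paper's own proofs do not address this, it is not a gap specific to your proposal.)
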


\onlyLong{
\begin{proof}
The proof for the case of $M=1$ extends to the case of $M>1$ as follows. Theorem \ref{theorem:single-fault-main} gives the bounds $f\leq N-\gamma$ and $|\cA|\geq \delta$ for one fault area for some positive integers $\gamma,\delta$. For $M$ fault areas, $M$ separate $|\cA|$ sets are needed, with each set tolerating a single fault area $F$. Therefore, the bounds of Theorem~\ref{theorem:single-fault-main} extend to multiple fault areas with a factor of $M$, i.e., \emph{GENERIC} needs $M\cdot \delta$ covers and $f\leq N-M \cdot \gamma$ faulty processes can be tolerated.   
Using the appropriate numbers from Theorem~\ref{theorem:single-fault-main} provides the claimed bounds.
\qed
\end{proof}
}

We have the following theorem for the case of circular fault set $\mathcal{F}$,  $|\mathcal{F}|=M\geq 1$.

\begin{theorem}
\label{theorem:multiple-fault-main-circle}
Given a set $\cP$ of $N$ processes  and a set of $M\geq 1$  circles $F$ positioned at unknown locations such that any process of $\cP$ covered by $F$ may be Byzantine. Algorithm GENERIC solves geoconsensus with the following guarantees:
\begin{itemize}
    \item If each $F$ and $A$ are circles of diameter $\ell$, $f\leq N-57M$ faulty processes can be tolerated given that $|\cA|\geq 85M$. 
    \item If each $F$ is a circle of diameter $\sqrt{2}\ell$ and $A$ is a circle of diameter $\ell$, $f\leq N-65M$ faulty processes can be tolerated given that $|\cA|\geq 97 M$. 
    \item If each $F$ is a circle of diameter $\ell/\sqrt{2}$ and $A$ is a circle of diameter $\ell$, $f\leq N-33M$ faulty processes can be tolerated given that $|\cA|\geq 49 M$. 
\end{itemize} 
\end{theorem}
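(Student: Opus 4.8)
The plan is to reuse the argument template of Theorems~\ref{theorem:single-fault-main} and~\ref{theorem:multiple-fault-main} verbatim, replacing only the square overlap counts by the circular ones. Recall that \emph{GENERIC} elects exactly one leader per cover area of $\cA$ and runs \emph{PSL} on this leader set, and that a leader is faulty precisely when its cover area is intersected by some fault area. Hence the whole analysis is driven by a single quantity: the overlap number $n(F)$, the maximum number of cover areas that one fault area can intersect. Once $n(F)$ is fixed, two inequalities control everything. First, \emph{PSL} tolerates $f'$ Byzantine participants among $3f'+1$ of them, so for a single fault area we need $|\cA|\ge 3\,n(F)+1$. Second, the surviving $|\cA|-n(F)\ge 2\,n(F)+1$ fault-free cover areas each furnish a correct leader, so the adversary may corrupt everything outside them, yielding tolerance $f\le N-(2\,n(F)+1)$.

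First I would settle the single-fault case ($M=1$) in each size regime by reading $n(F)$ off the covering lemmas of Section~\ref{section:covering}. For circular $F$ and circular $A$ both of diameter $\ell$, Lemma~\ref{lemma:overlap-A} gives $n(F)=28$, whence $|\cA|\ge 3\cdot 28+1=85$ and $f\le N-(2\cdot 28+1)=N-57$. For $F$ a circle of diameter $\sqrt{2}\,\ell$ against cover circles of diameter $\ell$, Lemma~\ref{lemma:overlap-circle} gives $n(F)=32$, yielding $|\cA|\ge 97$ and $f\le N-65$. For $F$ a circle of diameter $\ell/\sqrt{2}$, Lemma~\ref{lemma:overlap-Aroot2} gives $n(F)=16$, yielding $|\cA|\ge 49$ and $f\le N-33$. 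These are exactly the three claimed $M=1$ bounds.

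To pass from $M=1$ to general $M$, I would invoke the block argument of Theorem~\ref{theorem:multiple-fault-main}: reserve one batch of $3\,n(F)+1$ cover areas per fault area, so that $|\cA|\ge M(3\,n(F)+1)$ and the tolerance scales as $f\le N-M(2\,n(F)+1)$. Multiplying the three single-fault pairs by $M$ reproduces $(85M,57M)$, $(97M,65M)$, and $(49M,33M)$ verbatim, and the round count remains $M+2$ as in Theorem~\ref{theorem:basic}.

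The one step needing genuine care rather than bookkeeping is this $M$-fold extension, because \emph{PSL} is run once on the pooled leader set rather than blockwise. I must verify that the aggregate faulty-leader count, at most $M\,n(F)$, still meets the Byzantine threshold against the pooled size $|\cA|\ge M(3\,n(F)+1)=3M\,n(F)+M$; since $3(M\,n(F))+1\le 3M\,n(F)+M$ for every $M\ge 1$, it does, and consensus among the $\ge 2M\,n(F)+1$ correct leaders goes through. The inequality $M(2\,n(F)+1)\ge 2M\,n(F)+1$ then confirms that the stated conservative tolerance $f\le N-M(2\,n(F)+1)$ is sound. A final subtlety I would note explicitly: because the \emph{GCIRCLE} cover circles may mutually overlap, I would confirm that every fault-free cover circle still yields a well-defined correct leader, so that the faulty-leader count stays at most $M\,n(F)$.
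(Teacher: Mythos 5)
Your proposal is correct and follows essentially the same route as the paper's own proof: read the overlap numbers $n(F)=28,32,16$ off Lemmas~\ref{lemma:overlap-A}, \ref{lemma:overlap-circle}, and~\ref{lemma:overlap-Aroot2}, impose the \emph{PSL} threshold $|\cA|\geq (3\,n(F)+1)M$, and conclude the tolerance $f\leq N-(2\,n(F)+1)M$, which yields exactly the pairs $(85M,57M)$, $(97M,65M)$, $(49M,33M)$. Your explicit check that the pooled leader set still satisfies the Byzantine ratio (at most $M\,n(F)$ faulty among at least $3M\,n(F)+M$ leaders) is a point the paper leaves implicit, so it strengthens rather than departs from the paper's argument.
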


\onlyLong{
\begin{proof}
For the first case, we have that $n(F)=28$, when cover set $\cA$ is of circles of diameter $\ell$ and the fault area $F$ is also a circle of diameter $\ell$. Therefore, when $|\cA|\geq 85M$, we have that at least $|\cA|-n(F)\geq 57M$ circles containing only correct processes. Since Algorithm \ref{algorithm:consensus} reaches consensus using only the values of the leader processes in each area $A$, when we have $|\cA|\geq 85M$, it is guaranteed that $\geq 2\cdot |\cA|/3+1\geq 2\cdot n(F)M+1$ leader processes are correct and hence \emph{GENERIC} can reach consensus. The fault tolerance guarantee of $f\leq N-57M$ can be shown similarly to the proof of Theorem \ref{theorem:single-fault-main}.

For the second result, we have shown that $n(F)=32$. Therefore, we need $|\cA|\geq 3\cdot n(F)+1\geq 97$ for one faulty circle $F$ of diameter $\sqrt{2}\ell$. For $M$ faulty circles, we need $|\cA|\geq 97M$. Therefore, the fault tolerance bound is $f\leq N-(2\cdot n(F)M+1)=N-65M$. 

For the third result, we have shown that $n(F)=16$ for a single faulty circle of diameter $\ell/\sqrt{2}$. Therefore, we need $|\cA|\geq 49M$ and $f\leq N-33M$. 
\qed
\end{proof}
}

\section{Extensions to Higher Dimensions}
\label{section:extension}
\vspace{-2mm}
Our approach can be extended to solve geoconsensus in $d$-dimensions, $d\geq 3$.
\emph{BASIC} extends as is, whereas
\emph{GENERIC} runs without modifications in higher dimensions so long as we determine (i) the cover set $\cA$ of appropriate dimension and (ii) the overlap bound -- the maximum number of $d$-dimensional covers $A$ that the fault area $F$ may overlap. 
The bound on $f$ then depends on $M$ and the cover set size $|\cA|$.  
In what follows, we discuss 3-dimensional space.  The still higher dimensions can be studied similarly.

When $d=3$, the objective is to cover the embedded processes of $\cP$ by cubes of size $\ell\times\ell\times\ell$ or spheres of diameter $\ell$. It can be shown that the greedy cube (sphere) cover algorithm, let us call it \emph{GCUBE} (\emph{GSPHERE}), provides $2^{d-1}=4$ (16)  approximation of the optimal cover. The idea is to appropriately extend the 2-dimensional slab-based division  and axis-aligned square-based covers discussed in Section~\ref{subsection:greedy} to $3$-dimensions with rectangular cuboids and cube-based covers. \onlyShort{The detailed discussion is in Appendix.}

\onlyLong{
Suppose the coordinates of process $p_i\in \cP$ are $(x_i,y_i,z_i)$. 
\emph{GCUBE} operates as follows. It first finds $x_{min},y_{min},z_{min}$ as well as $x_{max},y_{max},z_{max}$. Then, a smallest axis-aligned (w.r.t. $x$-axis) cuboid, i.e. rectangular parallelepiped,  $R$ with the left-bottom-near corner $(x_{min},y_{min},z_{min})$ and the right-top-far corner at $(x_{max},y_{max},z_{max})$ is constructed such that $R$ covers all $N$ processes in $\cP$. Assume that $z-axis$ is away from the viewer. The depth of $R$ is $depth(R)=z_{max}-z_{min}$; $width(R)$ and $height(R)$ are similar to \emph{GSQUARE}.    

\emph{GCUBE} now divides $R$ into a set $\cR$ of $m$ cuboids $\cR=\{R_1,\cdots,R_m\}$ such that $depth(R_i)=\ell$ but the $width(R_i)=width(R)$ and   $height(R_i)=height(R)$. 
Each $R_i$ is further divided into a set of $\cR_i$ of $n$ cuboids $\cR_i=\{R_{i1},\ldots,R_{in}\}$ such that each $R_{ij}$ has $width(R_{ij})=width(R)$ but $height(R_{ij})=\ell$ and $depth(R_{ij})=\ell$. 
Each cuboid $R_{ij}$ is similar to the slab $R_i$ shown in Figure \ref{fig:slab} but has depth $\ell$. 

It now remains to cover each axis-aligned cuboid $R_{ij}$ with cubic areas $A$ of side $\ell$. Area $A$ can be put on $R_{ij}$ so that the top left corner of $A$ overlaps with the top left corner of cuboid $R_{ij}$. Slide $A$ on the $x$-axis to the right so that there is a process covered by $R_{ij}$ positioned on the left vertical plane of $A$. Fix that area $A$ as one cover cube and name it $A_1(R_{ij})$. Now consider only the processes in $R_{ij}$ not covered by $A_1(R_{ij})$. Place another $A$ on those processes so that there is a point in $R_{ij}$ positioned on the left vertical plane of $A$ and there is no process on the left of $A$ that is not covered by $A_1(R_{ij})$. Let that $A$ be $A_2(R_{ij})$. Continue this way to cover all the processes in $R_{ij}$.  

Apply the procedure of covering $R_{ij}$ to all  $m\times n$ cuboids.
Lemma \ref{lemma:slab-overlap} can be extended to show that no two cuboids $R_{ij}, R_{kl}$ overlap. Lemma \ref{lemma:square-overlap} can be extended to show that no two cubic covers $A_o(R_{ij})$ and $A_p(R_{kl})$ overlap. For each cuboid $R_{ij}$, Lemma \ref{lemma:optimal-slab} can be extended to show that no other algorithm produces the number of cubes $k'(R_{ij})$ less than the number of cubes $k(R_{ij})$ produced by algorithm \emph{GCUBE}.

Since the cover for each square cuboid $R_{ij}$ is individually optimal, 
let $k_{opt}(\cR)$ be the number of axis-aligned cubes to cover all $N$  processes in $R$ in the optimal cover algorithm. We now show that $k_{greedy}(\cR)\leq 4\cdot k_{opt}(\cR)$, i.e., \emph{GCUBE} provides 4-approximation. 
We do this by combining two approximation bounds. The first is for the $m$ cuboids $R_i$, for which we show $2$-approximation. We then provide $2$-approximation for each cuboid $R_i$ which is now divided into $n$ cuboids $R_{ij}$. Combining these two approximations, we have, in total, a $4$-approximation.  

As in the 2-dimensional case, divide the $m$ cuboids in the set $\cR$ into two sets $\cR_{odd}$ snd $\cR_{even}$. Arguing as in Lemma \ref{lemma:optimal-slab}, we can show that $k_{opt}(\cR)\geq \max\{k(R_{odd}),k(R_{even})\}$ and $k_{greedy}(\cR)=k(R_{odd})+k(R_{even})$. 
Therefore, the ratio $k_{greedy}(\cR)/k_{opt}(\cR)\leq 2$ while dividing $R$ into $m$ cuboids.

Now consider any cuboid $R_i\in \cR_{odd}$ ($R_i\in \cR_{even}$ case is analogous). 
$R_i$ is divided into a set $\cR_i$ of $n$ cuboids $R_{ij}$. Divide $n$ cuboids in the set $\cR_i$ into two sets $\cR{i,odd}$ and $\cR{i,even}$ based on odd and even $j$. Therefore, it can be shown that, similarly to Lemma \ref{lemma:optimal-slab}, that  $k_{opt}(\cR_i)\geq \max\{k(R_{i,odd}),k(R_{i,even})\}$ and $k_{greedy}(\cR_i)=k(R_{i,odd})+k(R_{i,even})$.  Therefore, $k_{greedy}(\cR_i)/k_{opt}(\cR_i)\leq 2$. Combining the $2$-approximations each for the two steps, we have the overall $4$-approximation.

Let us now discuss the $16$-approximation for spheres of diameter $\ell$. One cube $A_l(R_{ij})$ of side $\ell$ can be completely covered by $4$ spheres of diameter $\ell$. Since, for cubes, \emph{GCUBE} is  $4$-approximation, we, therefore, obtain that \emph{GSPHERE} is a $16$-approximation. 
We omit this discussion but it can be shown that \emph{GSPHERE}, appropriately extended from \emph{GCIRCLE} into 3-dimensions, achieves $(2^{d-1}\cdot d^d)=4\cdot 27=108$ approximation.  

Now we need to find the overlap number $n(F)$.  Cube $A$ of side $\ell$ has diameter $D=\sqrt{3}\ell$. That means that a cubic fault area $F$ that has the same size as $A$ can overlap at most 3 cubes $A_l(R_{ij})$ in all 3 dimensions.  Therefore, $F$ can cover at most $3^3=27$  cubes $A_l(R_{ij})$.
For sphere $F$ of diameter $\ell$, since one cube $A_l(R_{ij})$ can be completely covered by $4$ spheres of diameter $\ell$ and $F$ can be inscribed inside $A_l(R_{ij})$, it overlaps the total $4\cdot 27=108$ spheres $A_l(R_{ij})$. For the the axis-aligned case of cubic fault area $F$, it can be shown that $n(F)=8$ cubes $A_l(R_{ij})$. This is because it can overlap with at most $4$ cubes $A_l(R_{ij})$ as Figure ~\ref{fig:overlap-aaligned} and, due to depth $\ell$, it can go up to two layers, totaling 8. $n(F)=32$ for sphere $F$ is immediate since each cube $A_l(R_{ij})$ is covered by $4$ spheres of diameter $\ell$, sphere of diameter $\ell$ can be inscribed inside a cube $A_l(R_{ij})$ of side $\ell$, and a faulty cube $F$ of side $\ell$ can overlap at most 8 axis-aligned cubes $A_l(R_{ij})$.
}

We summarize the results for cubic covers and cubic fault areas in Theorem \ref{theorem:multiple-fault-main-3d-cube}. 
\vspace{-1mm}
\begin{theorem}
\label{theorem:multiple-fault-main-3d-cube}
Given a set $\cP$ of $N$ processes embedded in 3-d space and a set of $M\geq 1$ of cubic areas $F$ at unknown locations, such that any process of $\cP$ covered by $P$ may be Byzantine. Algorithm GENERIC solves geoconsensus with the following guarantees:
\begin{itemize}
    \item If $F$ is cube of side $\ell$ and not axis-aligned and $A$ is also a cube of side $\ell$, $f\leq N-55M$ faulty processes can be tolerated given that the cover set $|\cA|\geq 82M$. 
    \item If $F$ is cube of side $\ell$ and axis-aligned and $A$ is also a cube of side $\ell$, $f\leq N-17M$ faulty processes can be tolerated given that $|\cA|\geq 25 M$.
\item If $F$ is a sphere of diameter $\ell$ and  $A$ is a sphere of diameter $\ell$, $f\leq N-217M$ faulty processes can be tolerated given that $|\cA|\geq 325 M$. 
\end{itemize}
\end{theorem}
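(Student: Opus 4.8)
The plan is to reduce each of the three cases to the leader-consensus argument already used for Theorem~\ref{theorem:single-fault-main} and Theorem~\ref{theorem:multiple-fault-main}, with the only new ingredient being the three-dimensional overlap numbers $n(F)$. Recall that \emph{GENERIC} selects exactly one leader per cover $A\in\cA$ (here produced by \emph{GCUBE} or \emph{GSPHERE}), runs \emph{PSL} on the leaders, and has the remaining processes adopt the broadcast decision. A single fault area $F$ can corrupt the leaders of at most $n(F)$ covers, so with $M$ fault areas at most $M\cdot n(F)$ leaders are Byzantine. Since \emph{PSL} needs $3f+1$ participants to tolerate $f$ of them faulty, it suffices to guarantee $|\cA|\geq M(3\,n(F)+1)$, which leaves at least $M(2\,n(F)+1)$ correct leaders to reach agreement.

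First I would import the overlap numbers from the three-dimensional covering discussion preceding the theorem. There it is shown that a non-axis-aligned cube $F$ of side $\ell$ overlaps at most $n(F)=3^3=27$ axis-aligned cover cubes $A$ of side $\ell$ (each axis is spanned by at most $3$ covers because $A$ has diameter $\sqrt{3}\,\ell$), that an axis-aligned cube overlaps at most $n(F)=2^3=8$ covers, and that a sphere $F$ of diameter $\ell$ overlaps at most $n(F)=4\cdot 27=108$ spherical covers of diameter $\ell$ (since each cover cube is covered by $4$ spheres and $F$ inscribes in a cover cube). These three values are the only geometry the proof needs.

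Second I would substitute each $n(F)$ into the count from the first paragraph. For the non-axis-aligned cube, $3\,n(F)+1=82$ and $2\,n(F)+1=55$, so $|\cA|\geq 82M$ guarantees at least $55M$ correct leaders and \emph{PSL} succeeds; the adversary can therefore corrupt up to $f\leq N-55M$ processes while still leaving these $55M$ correct leaders intact. The axis-aligned cube gives $3\,n(F)+1=25$ and $2\,n(F)+1=17$, hence $|\cA|\geq 25M$ and $f\leq N-17M$. The sphere gives $3\,n(F)+1=325$ and $2\,n(F)+1=217$, hence $|\cA|\geq 325M$ and $f\leq N-217M$. The passage from a single fault area to $M$ of them is exactly the factor-$M$ argument in the proof of Theorem~\ref{theorem:multiple-fault-main}: reserving $3\,n(F)+1$ covers per fault area lets each area be absorbed independently. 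The round complexity $M+2$ is inherited verbatim from the \emph{BASIC}/\emph{GENERIC} analysis.

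The bookkeeping above is routine; the genuinely delicate step is justifying the overlap numbers, in particular that the product bounds $3^3$ and $2^3$ are correct upper bounds rather than something smaller forced by corner-trimming. In two dimensions the analogous count dropped from $3^2=9$ to $7$ in Lemma~\ref{lemma:overlap-A}, precisely because a tilted square cannot reach all four corners of the $3\times 3$ block; in three dimensions the honest worst case is harder to visualize, so the main obstacle is to verify that $27$ (resp.\ $8$, $108$) is a valid upper bound and to decide whether any corner-trimming sharpening is warranted. Since correctness of the theorem only requires an upper bound on $n(F)$, I would argue the product bounds suffice and defer any tightening, noting that a smaller $n(F)$ would only improve the constants $55M$, $17M$, $217M$ and the cover requirements.
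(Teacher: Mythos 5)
Your proposal is correct and takes essentially the same route as the paper: the paper's (largely implicit) proof likewise establishes the overlap numbers $n(F)=27$, $8$, and $108$ in the covering discussion preceding the theorem and then substitutes them into the \emph{GENERIC} pattern $|\cA|\geq (3\,n(F)+1)M$ and $f\leq N-(2\,n(F)+1)M$, with the factor-$M$ extension borrowed from Theorem~\ref{theorem:multiple-fault-main}. Your closing observation that only an upper bound on $n(F)$ is needed is also consistent with the paper, which performs no corner-trimming in three dimensions.
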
    

\section{Concluding Remarks} 
\label{section::end}
\vspace{-2mm}
Byzantine consensus is a relatively old, practically applicable and well-researched problem. It had been attracting extensive attention from researchers and engineers in distributed systems. In light of the recent development on location-based consensus protocols, such as G-PBFT \cite{LaoD0G20}, we have formally defined and studied the consensus problem of processes that are embedded in a $d$-dimensional plane, $d\geq 2$. We have explored both the possibility as well bounds for a solution to geoconsensus. Our results provide trade-offs on three parameters $N,M,$ and $f$, in constant to the trade-off between only two parameters $N$ and $f$ in the Byzantine consensus literature.  Our results also show the dependency of the tolerance guarantees on the shapes of the fault areas.  


For future work, it would be interesting to close or reduce the gap between the condition for impossibility and a solution (as discussed in Contributions). It would also be interesting to consider fault area $F$ shapes beyond circles and squares that we studied; to investigate process coverage by non-identical squares, circles or other shapes to see whether better bounds on the set $\cA$ and fault-tolerance guarantee $f$ can be obtained.  

\footnotesize
\bibliographystyle{splncs03}
\bibliography{references}

\begin{thebibliography}{10}
\providecommand{\url}[1]{\texttt{#1}}
\providecommand{\urlprefix}{URL }

\bibitem{abd2005fault}
Abd-El-Malek, M., Ganger, G.R., Goodson, G.R., Reiter, M.K., Wylie, J.J.:
  Fault-scalable byzantine fault-tolerant services. ACM SIGOPS Operating
  Systems Review  39(5),  59--74 (2005)

\bibitem{adya2002farsite}
Adya, A., Bolosky, W.J., Castro, M., Cermak, G., Chaiken, R., Douceur, J.R.,
  Howell, J., Lorch, J.R., Theimer, M., Wattenhofer, R.P.: Farsite: Federated,
  available, and reliable storage for an incompletely trusted environment. ACM
  SIGOPS Operating Systems Review  36(SI),  1--14 (2002)

\bibitem{bulusu2004self}
Bulusu, N., Heidemann, J., Estrin, D., Tran, T.: Self-configuring localization
  systems: Design and experimental evaluation. ACM Transactions on Embedded
  Computing Systems (TECS)  3(1),  24--60 (2004)

\bibitem{PBFT}
Castro, M., Liskov, B.: Practical byzantine fault tolerance and proactive
  recovery. ACM Transactions on Computer Systems (TOCS)  20(4),  398--461
  (2002)

\bibitem{castro2003base}
Castro, M., Rodrigues, R., Liskov, B.: Base: Using abstraction to improve fault
  tolerance. ACM Transactions on Computer Systems (TOCS)  21(3),  236--269
  (2003)

\bibitem{clark1990unit}
Clark, B.N., Colbourn, C.J., Johnson, D.S.: Unit disk graphs. Discrete
  mathematics  86(1-3),  165--177 (1990)

\bibitem{cramer1997secure}
Cramer, R., Gennaro, R., Schoenmakers, B.: A secure and optimally efficient
  multi-authority election scheme. European transactions on Telecommunications
  8(5),  481--490 (1997)

\bibitem{Fowler81}
Fowler, R.J., Paterson, M., Tanimoto, S.L.: Optimal packing and covering in the
  plane are np-complete. Inf. Process. Lett.  12(3),  133--137 (1981)

\bibitem{hightower2001location}
Hightower, J., Borriello, G.: Location systems for ubiquitous computing.
  computer  34(8),  57--66 (2001)

\bibitem{koo2004broadcast}
Koo, C.Y.: Broadcast in radio networks tolerating byzantine adversarial
  behavior. In: PODC. pp. 275--282 (2004)

\bibitem{kubiatowicz2000oceanstore}
Kubiatowicz, J., Bindel, D., Chen, Y., Czerwinski, S., Eaton, P., Geels, D.,
  Gummadi, R., Rhea, S., Weatherspoon, H., Weimer, W., et~al.: Oceanstore: An
  architecture for global-scale persistent storage. ACM SIGOPS Operating
  Systems Review  34(5),  190--201 (2000)

\bibitem{lamport1982byzantine}
Lamport, L., Shostak, R., Pease, M.: The byzantine generals problem. ACM
  Transactions on Programming Languages and Systems  4(3),  382--401 (1982)

\bibitem{LaoD0G20}
Lao, L., Dai, X., Xiao, B., Guo, S.: {G-PBFT:} {A} location-based and scalable
  consensus protocol for iot-blockchain applications. In: IPDPS. pp. 664--673
  (2020)

\bibitem{marathe1995simple}
Marathe, M.V., Breu, H., Hunt~III, H.B., Ravi, S.S., Rosenkrantz, D.J.: Simple
  heuristics for unit disk graphs. Networks  25(2),  59--68 (1995)

\bibitem{miller2016honey}
Miller, A., Xia, Y., Croman, K., Shi, E., Song, D.: The honey badger of bft
  protocols. In: Proceedings of the 2016 ACM SIGSAC Conference on Computer and
  Communications Security. pp. 31--42 (2016)

\bibitem{moniz2012byzantine}
Moniz, H., Neves, N.F., Correia, M.: Byzantine fault-tolerant consensus in
  wireless ad hoc networks. IEEE Transactions on Mobile Computing  12(12),
  2441--2454 (2012)

\bibitem{Pease80}
Pease, M., Shostak, R., Lamport, L.: Reaching agreement in the presence of
  faults. J. ACM  27(2),  228–234 (Apr 1980),
  \url{https://doi.org/10.1145/322186.322188}

\bibitem{pelc2005broadcasting}
Pelc, A., Peleg, D.: Broadcasting with locally bounded byzantine faults.
  Information Processing Letters  93(3),  109–115 (Feb 2005)

\bibitem{rushby2001bus}
Rushby, J.: Bus architectures for safety-critical embedded systems. In:
  International Workshop on Embedded Software. pp. 306--323. Springer (2001)

\bibitem{sousa2018byzantine}
Sousa, J., Bessani, A., Vukolic, M.: A byzantine fault-tolerant ordering
  service for the hyperledger fabric blockchain platform. In: DSN. pp. 51--58.
  IEEE (2018)

\bibitem{vaidya2012iterative}
Vaidya, N.H., Tseng, L., Liang, G.: Iterative approximate byzantine consensus
  in arbitrary directed graphs. In: PODC. pp. 365--374 (2012)

\bibitem{zamani2018rapidchain}
Zamani, M., Movahedi, M., Raykova, M.: Rapidchain: Scaling blockchain via full
  sharding. In: CCS. pp. 931--948 (2018)

\end{thebibliography}
\onlyShort{
\normalsize
\newpage
\setcounter{page}{1}
\section*{Appendix}

\begin{table}[!t]
\vspace{-4mm}
{\footnotesize
\centering
\begin{tabular}{l|l}
\toprule
{\bf Symbol} & {\bf Description}  \\
\toprule
$N$; $\cP$; $(x_i,y_i)$ &  number of processes; $\{p_1,\ldots,p_N\}$; planar coordinates of process $p_i$\\
\hline
$F; D$; $\mathcal{F}$ & fault area; diameter of $F$; a set of fault areas $F$ with $|\mathcal{F}|=M$\\
\hline
$f$ & number of faulty processes\\
\hline
$\cP_D$ & processes in $\cP$ such that pairwise distance between them is more than $D$\\
\hline
$A$ (or $A_j(R_i)$); $\cA$ & cover area that is of same shape and size as $F$; a set of cover areas $A$\\
\hline
$n(F)$ & number of cover areas $A\in \cA$ that a fault area $F$ overlaps\\ 
\bottomrule
\end{tabular}
\caption{Notation used throughout the paper.}
\label{table:1}
}
\vspace{-0mm}
\end{table}

\begin{algorithm}[!t]
{\small
{\bf Setting:} A set $\cP$ of $N$ processes positioned at distinct coordinates. Each process can communicate with all other processes and knows their coordinates.  There are $M\geq 1$ identical fault areas $F$. The diameter of a fault area is $D$. The locations of any area $F$ is not known to correct processes. Each process covered by any $F$ is Byzantine.\\
{\bf Input:} Each process has initial value either 0 or 1.\\
{\bf Output:} Each correct process outputs decision subject to Geoconsensus.\\
{\bf \em Procedure for process $p_k\in \cP$}\\
// leaders selection \\
Let $P_D \leftarrow \emptyset$, $P_C  \leftarrow \cP$;  \\  
\While{$P_C \neq \emptyset$}{
    let $P_3 \subset P_D$ be a set of processes such that $\forall p_j \in P_3$, 
        $\mathit{Nb}(p_j, D)$ has distance $D$ independent set of at most 3; \\
    let $p_i \in P_3$, located in $(x_i, y_i)$ be
    the lexicographically smallest process in $P_3$, i.e. $\forall p_j \neq p_i \in P_3:$ located in $(x_j, y_j)$ either  $x_i<x_j$ or $x_i=x_j$ and $y_i<y_j$; \\
    add $p_i$ to $P_D$; \\
    remove $p_i$ from $P_C$; \\
   $\forall p_j \in \mathit{Nb}(p_i, D)$ remove $p_j$ from $P_C$; \\
}

// consensus \\
\eIf{$p_k \in P_D$}
{run \emph{PSL} algorithm, achieve decision $v$, broadcast $v$, output $v$;}
{
wait for messages with identical decision $v$ from at least $2M + 1$ processes from $\cP_D$, output $v$; 
}

\caption{\emph{BASIC} geoconsensus algorithm.}
\label{algorithm:simple}
}
\end{algorithm}	

\begin{algorithm}[!t]
{\small
{\bf Setting:} A set $\cP$ of $N$ processes positioned at distinct planar coordinates. Each process can communicate with all other processes and knows the coordinates of all other processes. The processes covered by the fault area $F$ at unknown location are Byzantine. There are $M\geq 1$ of identical fault areas $F$ and processes know $M$.\\
{\bf Input:} Each process has initial value either 0 or 1.\\
{\bf Output:} Each correct process outputs decision subject to geoconsensus\\
{\bf \em Procedure for process $p_k$}\\
// leaders selection \\
Compute the set $\cA$ of covers $A_j(R_i)$;\\
// {\bf For each} cover $A_j(R_i)\in \cA$ {\bf do}\\
 
$\cP_{min}\leftarrow$ a set of processes with minimum $y$-coordinate among covered by $A_j(R_i)$;\\
\If{$|\cP_{min}|=1$}
{
$l_j(A_j(R_i))\leftarrow$ the only process in $\cP_{min}$;\\
}
\Else
{
$l_j(A_j(R_i))\leftarrow$ the process in  $\cP_{min}$ with minimum $x$-coordinate;\\
}

Let $P_L$ be the set of leaders, one for each $A_j(R_i)\in \cA$; \\
// consensus \\
\eIf{$p_k \in \cP_L$}
{run \emph{PSL} algorithm, achieve decision $v$, broadcast $v$, output $v$}
{
wait for messages with identical decision $v$ from at least $2M + 1$ processes from $\cP_L$, output $v$ 
}
\caption{\emph{GENERIC} geoconsensus algorithm.}
\label{algorithm:consensus}
}
\end{algorithm}	

\noindent{\bf Proof of Theorem \ref{theorem:impossibility}.}

\begin{proof}
Set $N=3\cdot \kappa$, for some positive integer $\kappa\geq 1$. Place three areas $A$ on the plane in arbitrary locations.  To embed processes in $\cP$, consider a bijective placement function $f:\cP\rightarrow\cA$ such that $\kappa$ processes are covered by each area $A$. Let $v$ and $v'$ be two distinct input values $0$ and $1$. Suppose one area $A$ is fault area, meaning that all $\kappa$ processes in that area are faulty. 

This construction reduces the Byzantine goeconsensus problem to the impossibility construction for the classic Byzantine consensus problem given in the theorem in Section 4 of Pease {\it et al.}~\cite{Pease80} for the $3\kappa$ processes out of which $\kappa$ are Byzantine.
\qed
\end{proof}

\noindent{\bf Proof of Theorem \ref{theorem:square-cover}.}

\begin{proof}
The proof is to show that SQUARE-COVER is equivalent to the BOX-COVER problem which was shown to be NP-Complete by Fowler {\it et al.}~\cite{Fowler81}.
BOX-COVER is defined as follows: There is a set of $N$ points on the plane such that each point has unique integer coordinates. A closed box (rigid but relocatable) is set to be a square with side 2 and is axis-aligned. The problem is to decide whether a set of $k\geq 1$ identical axis-aligned closed boxes are enough to completely cover all $N$ points. 
Fowler {\ti et al.} provided a polynomial-time reduction of 3-SAT to BOX-COVER such that $k$ boxes will suffice if and only if the 3-SAT formula is satisfiable.  In this setting,  SQUARE-COVER (Definition \ref{definition:square-cover}) reduces to BOX-COVER for $\ell=2$. Therefore, the NP-Completeness of BOX-COVER extends to SQUARE-COVER. \qed
\end{proof}

\noindent{\bf Proof of Lemma \ref{lemma:slab-overlap}.}

\begin{proof}
It is sufficient to prove this lemma for adjacent slabs. Suppose slabs $R_i$ and $R_j$ are adjacent, i.e., $j=i+1$. According to the operation of \emph{GSQUARE}, after the location of $R_i$ is selected, only processes that are not covered by the slabs so far are considered for the selection of $R_j$. The first such process lies above the top (horizontal) side of $R_i$. Hence, there is a gap between the top side of $R_i$ and the bottom side of $R_j$.\qed
\end{proof}

\newpage
\noindent{\bf Proof of Lemma \ref{lemma:square-overlap}.}

\begin{proof}
It is sufficient to prove the lemma for adjacent squares. Suppose $A_j({R_i})$ and $A_k({R_i})$ are adjacent, i.e., $k=j+1$. Consider the operation of \emph{GSQUARE} in slab $R_i$ covered by $A_j({R_i})$ and $A_k({R_i})$. Area $A_k({R_i})$ only covers the processes that are not covered by $A_j({R_i})$ and, therefore, to the right of the right side of $A_j({R_i})$. As the left side of $A_k({R_i})$ is placed on the first such process, there is a non-empty gap between the two squares: $A_j({R_i})$ and $A_k({R_i})$.\qed  
\end{proof}

\noindent{\bf Proof of Lemma \ref{lemma:optimal-slab}.}

\begin{proof}
Notice that slab $R_i$ has height $height(R_i)=\ell$ which is the same as the sides of (axis-aligned) squares $A_j(R_i)$ used to cover $R_i$.

\emph{GSQUARE} operates such that it places a square $A$ so that some process $p$ lies on the left side of this square. Consider a sequence of such processes: $\sigma \equiv \langle p_1 \cdots p_u, p_{u+1} \cdots p_j \rangle$.   Consider any pair of subsequent processes $p_u$ and $p_{u+1}$ in $\sigma$ with respective coordinates $(x_u, y_u)$
and $(x_{u+1}, y_{u+1})$. \emph{GSQUARE} covers them with
non-overlapping squares with side $\ell$. Therefore, $x_u + \ell < x_{u+1}$. That is, the distance between consequent processes in $\sigma$ is greater than $\ell$. Hence, any such pair of processes may not be covered by a 
single square. Since the number of squares placed by \emph{GSQUARE} in slab $R_i$ is $k$, the number of processes in $\sigma$ is also $k$. Any algorithm that covers these processes with axis-aligned squares requires at least $k$ squares. \qed
\end{proof}

\noindent{\bf Proof of Lemma \ref{lemma:odd-even}.}

\begin{proof}
Consider two slabs $R_i$ and $R_{i+2}$ for $i\geq 1$. Consider a square $A_j(R_i)$ placed by \emph{GSQUARE}. Consider also two processes $p\in R_i$ and $p'\in R_{i+2}$, respectively.  The distance between $p$ and $p'$ is $d(p,p')>\ell$. 
Therefore, if $A_j(R_i)$  covers $p$, then it cannot cover $p'\in R_{i+2}$.  Therefore, no algorithm can produce the optimal number of squares $k_{opt}(\cR)$ less than the maximum between $k(\cR_{odd})$ and $k(\cR_{even})$. 
\qed
\end{proof}

\noindent{\bf Proof of Lemma \ref{lemma:square-approximation}.}

\begin{proof}
From Lemma \ref{lemma:optimal-slab}, we obtain that \emph{GSQUARE} is optimal for each slab $R_i$. From Lemma \ref{lemma:odd-even}, we get that for any algorithm $k_{opt}(\cR)\geq \max\{k(\cR_{odd}),k(\cR_{even})\}.$
Moreover, the \emph{GSQUARE} produces the total number of squares $k_{greedy}(\cR)=k(\cR_{odd})+k(\cR_{even}).$
Comparing $k_{greedy}(\cR)$ with $k_{opt}(\cR)$, we get
$$\frac{k_{greedy}(\cR)}{k_{opt}(\cR)}\leq \frac{k(\cR_{odd})+k(\cR_{even})}{\max\{k(\cR_{odd}),k(\cR_{even})\}}\leq \frac{2\cdot \max\{k(\cR_{odd}),k(\cR_{even})\}}{\max\{k(\cR_{odd}),k(\cR_{even})\}} \leq 2. ~~~~~ \qed$$
\end{proof}

\noindent{\bf Proof of Lemma \ref{lemma:circle-approximation}.}
\begin{proof}
We first show that 
$k^C_{opt}(\cR)\geq \max\{k^S(\cR_{odd}),k(\cR^S_{even})\},$
where $k^S(\cR_{odd})$ and $k^S(\cR_{even})$, respectively, are the number of squares $A=\ell\times \ell$ to cover the slabs in $\cR_{odd}$ and   $\cR_{even}$. Consider any square cover $A_j(R_i)$ of any slab $R_i$. A circle $C$ of diameter $\ell$ can cover at most the processes in $A_j(R_i)$ but not in any other square $A_l(R_i)$. This is because the perimeter of $C$ needs to pass through the left side of $A_j(R_i)$ (since there is a process positioned on that line in $A_j(R_i)$) and with diameter $\ell$, the perimeter of $C$ can touch at most the right side of $A_j(R_i)$.  

We now prove the upper bound. Since one square area $A=\ell\times \ell$ is now covered using at most 4 circles $C$ of diameter $\ell$, \emph{GCIRCLE} produces the total number of circles $k^C_{greedy}(\cR)=4\cdot (k^S(\cR_{odd})+k^S(\cR_{even})).$

Comparing $k^C_{greedy}(\cR)$ with $k^C_{opt}(\cR)$ as in  Lemma \ref{lemma:square-approximation}, we have that 
$$\frac{k^C_{greedy}(\cR)}{k^C_{opt}(\cR)}\leq \frac{4\cdot (k^S(\cR_{odd})+k^S(\cR_{even}))}{\max\{k^S(\cR_{odd}),k^S(\cR_{even})\}}\leq \frac{8\cdot \max\{k^S(\cR_{odd}),k^S(\cR_{even})\}}{\max\{k^S(\cR_{odd}),k^S(\cR_{even})\}} 
\leq 
8.~\qed$$
\end{proof}

\noindent{\bf Proof of Lemma \ref{lemma:overlap-A}.}

\begin{proof}
Suppose $F$ is axis-aligned.  $F$ may overlap at most two squares $A$ horizontally. Indeed, the total width covered by two squares in $\cA$ is $>2\ell$ since the squares do not overlap. Meanwhile, the total width of $F$ is $\ell$. 
Similarly, $F$ may overlap at most two squares vertically. Combining possible horizontal and vertical overlaps, we obtain that $F$ may overlap at most 4 distinct axis-aligned areas $A$. See Figure~\ref{fig:overlap-aaligned} for illustration.

\begin{figure}[!t]
\vspace{-0mm}
\begin{center}
\includegraphics[height=1.0in]{overlap-aaligned.pdf}
\end{center}
\vspace{-7mm}
\caption{The maximum overlap of an axis-aligned fault area $F$ with the identical axis-aligned cover squares $A$ of same size.}
\label{fig:overlap-aaligned}
\vspace{-0mm}
\end{figure}

\begin{figure}[!t]
\vspace{-2mm}
\begin{center}
\includegraphics[height=1.15in]{overlap-nonaligned.pdf}
\end{center}
\vspace{-7mm}
\caption{The maximum overlap of a non-axis-aligned fault area $F$ with the identical axis-aligned cover squares $A$ of the same size.}
\label{fig:overlap-nonaligned}
\vspace{-5mm}
\end{figure}

Consider now that $F$ is not axis-aligned. 
$F$ can span at most $\sqrt{2}\ell$ horizontally and $\sqrt{2}\ell$ vertically. Therefore, horizontally, $F$ can overlap at most three areas $A$.  Vertically, $F$ can overlap three areas as well. However, not all three areas on the top and bottom rows can be overlapped at once. Specifically, not axis-aligned $F$ can only overlap  2 squares in the top row and 2 in the bottom row. Therefore, in total, $F$ may overlap at most 7 distinct axis-aligned areas. Figure~\ref{fig:overlap-nonaligned} provides an illustration. 

For the case of circular $F$, one square area $A$ can be completely covered by 4 circles $C$. Furthermore, square $F$ of size $\ell$ overlaps at most 7 square areas $A$ of side $\ell$. Moreover, the circular $F$ of diameter $\ell$ can be inscribed in a square of side $\ell$. Therefore, a circular $F$ cannot overlap more than 7 squares, and hence  the circular $F$ may overlap in total at most $7\times 4=28$ circles $C$.    
\qed  
\end{proof}

\noindent{\bf Proof of Lemma \ref{lemma:overlap-Aroot2}.}
\begin{proof}
$F$ can extend, horizontally and vertically, at most $\sqrt{2}\cdot \ell/\sqrt{2}=\ell.$ Therefore, $F$ can overlap no more than two squares $A$ horizontally and two squares $A$ vertically. 

For the case of circular $F$ of diameter $\ell/\sqrt{2}$, it can be inscribed in a square of side $\ell/\sqrt{2}$. 
This square can overlap no more than $4$ squares of $\ell\times\ell$. Each such square can be covered by at most $4$ circles of diameter $\ell$. Therefore, the total number of circles to overlap the circular fault area $F$ is $4\times4 = 16$.
\qed
\end{proof}

\noindent{\bf Proof of Lemma \ref{lemma:overlap-circle}.}
\begin{figure}[!t]
\vspace{-0mm}
\begin{center}
\includegraphics[height=1.12in]{overlap-circle.pdf}
\end{center}
\vspace{-6mm}
\caption{The maximum overlap of a circular fault area $F$ of diameter $\sqrt{2}\ell$ with axis-aligned cover squares $A$ of side $\ell$.}
\label{fig:overlap-circle}
\vspace{-2mm}
\end{figure}

\begin{proof}
Since $F$ is a circle of diameter $\sqrt{2}\ell$, $F$ can span horizontally and vertically at most $\sqrt{2}\cdot \ell$. 
Arguing similarly as in Lemma \ref{lemma:overlap-A}, $F$ can overlap either at most 3 squares $A$ in top row or 3 on the bottom row. Interestingly, if $F$ overlaps 3 squares in the top row, it can only overlap at most 2 in the bottom row and vice-versa. Therefore, in total, $F$ overlaps at most 8 distinct squares of side $\ell$. Figure~\ref{fig:overlap-circle} provides an illustration. 

Since one square of side $\ell$ can be completely covered using 4 circles of diameter $\ell$, $F$ of diameter $\sqrt{2}\ell$ can cover at most $8\times 4=32$ circles $C$ of diameter $\ell$. \qed
\end{proof}

\begin{samepage}

\noindent{\bf Proof of Theorem \ref{theorem:multiple-fault-main}.}

\begin{proof}
The proof for the case of $M=1$ extends to the case of $M>1$ as follows. Theorem \ref{theorem:single-fault-main} gives the bounds $f\leq N-\gamma$ and $|\cA|\geq \delta$ for one fault area for some positive integers $\gamma,\delta$. For $M$ fault areas, $M$ separate $|\cA|$ sets are needed, with each set tolerating a single fault area $F$. Therefore, the bounds of Theorem~\ref{theorem:single-fault-main} extend to multiple fault areas with a factor of $M$, i.e., \emph{GENERIC} needs $M\cdot \delta$ covers and $f\leq N-M \cdot \gamma$ faulty processes can be tolerated.   
Using the appropriate numbers from Theorem~\ref{theorem:single-fault-main} provides the claimed bounds.
\qed
\end{proof}

\noindent{\bf Proof of Theorem \ref{theorem:multiple-fault-main-circle}.}

\begin{proof}
For the first case, we have that $n(F)=28$, when cover set $\cA$ is of circles of diameter $\ell$ and the fault area $F$ is also a circle of diameter $\ell$. Therefore, when $|\cA|\geq 85M$, we have that at least $|\cA|-n(F)\geq 57M$ circles containing only correct processes. Since Algorithm \ref{algorithm:consensus} reaches consensus using only the values of the leader processes in each area $A$, when we have $|\cA|\geq 85M$, it is guaranteed that $\geq 2\cdot |\cA|/3+1\geq 2\cdot n(F)M+1$ leader processes are correct and hence \emph{GENERIC} can reach consensus. The fault tolerance guarantee of $f\leq N-57M$ can be shown similarly to the proof of Theorem \ref{theorem:single-fault-main}.

For the second result, we have shown that $n(F)=32$. Therefore, we need $|\cA|\geq 3\cdot n(F)+1\geq 97$ for one faulty circle $F$ of diameter $\sqrt{2}\ell$. For $M$ faulty circles, we need $|\cA|\geq 97M$. Therefore, the fault tolerance bound is $f\leq N-(2\cdot n(F)M+1)=N-65M$. 

For the third result, we have shown that $n(F)=16$ for a single faulty circle of diameter $\ell/\sqrt{2}$. Therefore, we need $|\cA|\geq 49M$ and $f\leq N-33M$. 
\qed
\end{proof}
\end{samepage}

\noindent{\bf Description of extension of \emph{GENERIC} to 3-d space in Section \ref{section:extension}.}

\ \\
Suppose the coordinates of process $p_i\in \cP$ are $(x_i,y_i,z_i)$. 
\emph{GCUBE} operates as follows. It first finds $x_{min},y_{min},z_{min}$ as well as $x_{max},y_{max},z_{max}$. Then, a smallest axis-aligned (w.r.t. $x$-axis) cuboid, i.e. rectangular parallelepiped,  $R$ with the left-bottom-near corner $(x_{min},y_{min},z_{min})$ and the right-top-far corner at $(x_{max},y_{max},z_{max})$ is constructed such that $R$ covers all $N$ processes in $\cP$. Assume that $z-axis$ is away from the viewer. The depth of $R$ is $depth(R)=z_{max}-z_{min}$; $width(R)$ and $height(R)$ are similar to \emph{GSQUARE}.    

\emph{GCUBE} now divides $R$ into a set $\cR$ of $m$ cuboids $\cR=\{R_1,\cdots,R_m\}$ such that $depth(R_i)=\ell$ but the $width(R_i)=width(R)$ and   $height(R_i)=height(R)$. 
Each $R_i$ is further divided into a set of $\cR_i$ of $n$ cuboids $\cR_i=\{R_{i1},\ldots,R_{in}\}$ such that each $R_{ij}$ has $width(R_{ij})=width(R)$ but $height(R_{ij})=\ell$ and $depth(R_{ij})=\ell$. 
Each cuboid $R_{ij}$ is similar to the slab $R_i$ shown in Figure \ref{fig:slab} but has depth $\ell$. 

It now remains to cover each axis-aligned cuboid $R_{ij}$ with cubic areas $A$ of side $\ell$. Area $A$ can be put on $R_{ij}$ so that the top left corner of $A$ overlaps with the top left corner of cuboid $R_{ij}$. Slide $A$ on the $x$-axis to the right so that there is a process covered by $R_{ij}$ positioned on the left vertical plane of $A$. Fix that area $A$ as one cover cube and name it $A_1(R_{ij})$. Now consider only the processes in $R_{ij}$ not covered by $A_1(R_{ij})$. Place another $A$ on those processes so that there is a point in $R_{ij}$ positioned on the left vertical plane of $A$ and there is no process on the left of $A$ that is not covered by $A_1(R_{ij})$. Let that $A$ be $A_2(R_{ij})$. Continue this way to cover all the processes in $R_{ij}$.  

Apply the procedure of covering $R_{ij}$ to all  $m\times n$ cuboids.
Lemma \ref{lemma:slab-overlap} can be extended to show that no two cuboids $R_{ij}, R_{kl}$ overlap. Lemma \ref{lemma:square-overlap} can be extended to show that no two cubic covers $A_o(R_{ij})$ and $A_p(R_{kl})$ overlap. For each cuboid $R_{ij}$, Lemma \ref{lemma:optimal-slab} can be extended to show that no other algorithm produces the number of cubes $k'(R_{ij})$ less than the number of cubes $k(R_{ij})$ produced by algorithm \emph{GCUBE}.

Since the cover for each square cuboid $R_{ij}$ is individually optimal, 
let $k_{opt}(\cR)$ be the number of axis-aligned cubes to cover all $N$  processes in $R$ in the optimal cover algorithm. We now show that $k_{greedy}(\cR)\leq 4\cdot k_{opt}(\cR)$, i.e., \emph{GCUBE} provides 4-approximation. 
We do this by combining two approximation bounds. The first is for the $m$ cuboids $R_i$, for which we show $2$-approximation. We then provide $2$-approximation for each cuboid $R_i$ which is now divided into $n$ cuboids $R_{ij}$. Combining these two approximations, we have, in total, a $4$-approximation.  

As in the 2-dimensional case, divide the $m$ cuboids in the set $\cR$ into two sets $\cR_{odd}$ snd $\cR_{even}$. Arguing as in Lemma \ref{lemma:optimal-slab}, we can show that $k_{opt}(\cR)\geq \max\{k(R_{odd}),k(R_{even})\}$ and $k_{greedy}(\cR)=k(R_{odd})+k(R_{even})$. 
Therefore, the ratio $k_{greedy}(\cR)/k_{opt}(\cR)\leq 2$ while dividing $R$ into $m$ cuboids.

Now consider any cuboid $R_i\in \cR_{odd}$ ($R_i\in \cR_{even}$ case is analogous). 
$R_i$ is divided into a set $\cR_i$ of $n$ cuboids $R_{ij}$. Divide $n$ cuboids in the set $\cR_i$ into two sets $\cR{i,odd}$ and $\cR{i,even}$ based on odd and even $j$. Therefore, it can be shown that, similarly to Lemma \ref{lemma:optimal-slab}, that  $k_{opt}(\cR_i)\geq \max\{k(R_{i,odd}),k(R_{i,even})\}$ and $k_{greedy}(\cR_i)=k(R_{i,odd})+k(R_{i,even})$.  Therefore, $k_{greedy}(\cR_i)/k_{opt}(\cR_i)\leq 2$. Combining the $2$-approximations each for the two steps, we have the overall $4$-approximation.

Let us now discuss the $16$-approximation for spheres of diameter $\ell$. One cube $A_l(R_{ij})$ of side $\ell$ can be completely covered by $4$ spheres of diameter $\ell$. Since, for cubes, \emph{GCUBE} is  $4$-approximation, we, therefore, obtain that \emph{GSPHERE} is a $16$-approximation. 
We omit this discussion but it can be shown that \emph{GSPHERE}, appropriately extended from \emph{GCIRCLE} into 3-dimensions, achieves $(2^{d-1}\cdot d^d)=4\cdot 27=108$ approximation.  

Now we need to find the overlap number $n(F)$.  Cube $A$ of side $\ell$ has diameter $D=\sqrt{3}\ell$. That means that a cubic fault area $F$ that has the same size as $A$ can overlap at most 3 cubes $A_l(R_{ij})$ in all 3 dimensions.  Therefore, $F$ can cover at most $3^3=27$  cubes $A_l(R_{ij})$.
For sphere $F$ of diameter $\ell$, since one cube $A_l(R_{ij})$ can be completely covered by $4$ spheres of diameter $\ell$ and $F$ can be inscribed inside $A_l(R_{ij})$, it overlaps the total $4\cdot 27=108$ spheres $A_l(R_{ij})$. For the the axis-aligned case of cubic fault area $F$, it can be shown that $n(F)=8$ cubes $A_l(R_{ij})$. This is because it can overlap with at most $4$ cubes $A_l(R_{ij})$ as Figure ~\ref{fig:overlap-aaligned} and, due to depth $\ell$, it can go up to two layers, totaling 8. $n(F)=32$ for sphere $F$ is immediate since each cube $A_l(R_{ij})$ is covered by $4$ spheres of diameter $\ell$, sphere of diameter $\ell$ can be inscribed inside a cube $A_l(R_{ij})$ of side $\ell$, and a faulty cube $F$ of side $\ell$ can overlap at most 8 axis-aligned cubes $A_l(R_{ij})$.
}
\end{document}